\definecolor{blue}{named}{black}
\definecolor{purple}{named}{black}
\begin{document}

\date{}
\title{\Large \bf Attesting Model Lineage by Consisted Knowledge Evolution \\ with Fine-Tuning Trajectory}
\author{
{\rm ~~~Zhuoyi Shang$^{1,2,3,\star}$, Jiasen Li$^{1,2,3,\star}$, Pengzhen Chen$^{1,2,3,\star}$,    }\\
{\rm Yanwei Liu$^{ 1,3,\dagger}$, Xiaoyan Gu$^{1,2,3,\dagger}$, Weiping Wang$^{ 1 }$ }\\
$^{1}$Institute of Information Engineering, Chinese Academy of Sciences, Beijing, China\\
$^{2}$ School of Cyber Security, University of Chinese Academy of Sciences, Beijing, China \\
$^{3}$State Key Laboratory of Cyberspace Security Defense, Beijing, China\\
{\rm \{shangzhuoyi,~lijiasen,~chenpengzhen,~liuyanwei,~guxiaoyan,~wangweiping\}@iie.ac.cn}
} 
\maketitle
\begin{abstract}
The fine-tuning technique in deep learning gives rise to an emerging lineage relationship among models. This lineage provides a promising perspective for addressing security concerns such as unauthorized model redistribution and false claim of model provenance, which are particularly pressing in \textcolor{blue}{open-weight model} libraries where robust lineage verification mechanisms are often lacking. Existing approaches to model lineage detection primarily rely on static architectural similarities, which are insufficient to capture the dynamic evolution of knowledge that underlies true lineage relationships. Drawing inspiration from the genetic mechanism of human evolution, we tackle the problem of model lineage attestation by verifying the joint trajectory of knowledge evolution and parameter modification. To this end, we propose a novel model lineage attestation framework. In our framework, model editing is first leveraged to quantify parameter-level changes introduced by fine-tuning. Subsequently, we introduce a novel knowledge vectorization mechanism that refines the evolved knowledge within the edited models into compact representations by the assistance of probe samples. The probing strategies are adapted to different types of model families. These embeddings serve as the foundation for verifying the arithmetic consistency of knowledge relationships across models, thereby enabling robust attestation of model lineage. Extensive experimental evaluations demonstrate the effectiveness and resilience of our approach in a variety of adversarial scenarios in the real world. Our method consistently achieves reliable lineage verification across a broad spectrum of model types, including classifiers, diffusion models, and large language models.
\end{abstract}

{
  \renewcommand{\thefootnote}{\fnsymbol{footnote}}
  \setcounter{footnote}{0}

  \footnotetext{$\star$~These three authors contributed equally to this work.}
  \footnotetext{$\dagger$~Corresponding author.}
}

\section{Introduction}
The rapid advancement of deep learning (DL) has been largely driven by a paradigm shift from traditional \emph{training-from-scratch} approaches to the \emph{pre-training–fine-tuning} paradigm. This shift enables efficient domain adaptation, particularly for large-scale models, by allowing pre-trained models to be minimally fine-tuned for alignment with specific tasks or human preferences. As a result, fine-tuning introduces a new form of dependency among models, known as \emph{model lineage}~\cite{Lineage}, where multiple rounds of successive fine-tuning create inter-generational relationships between models.
\begin{figure}
	\centering
    	\includegraphics[width=.90\linewidth]{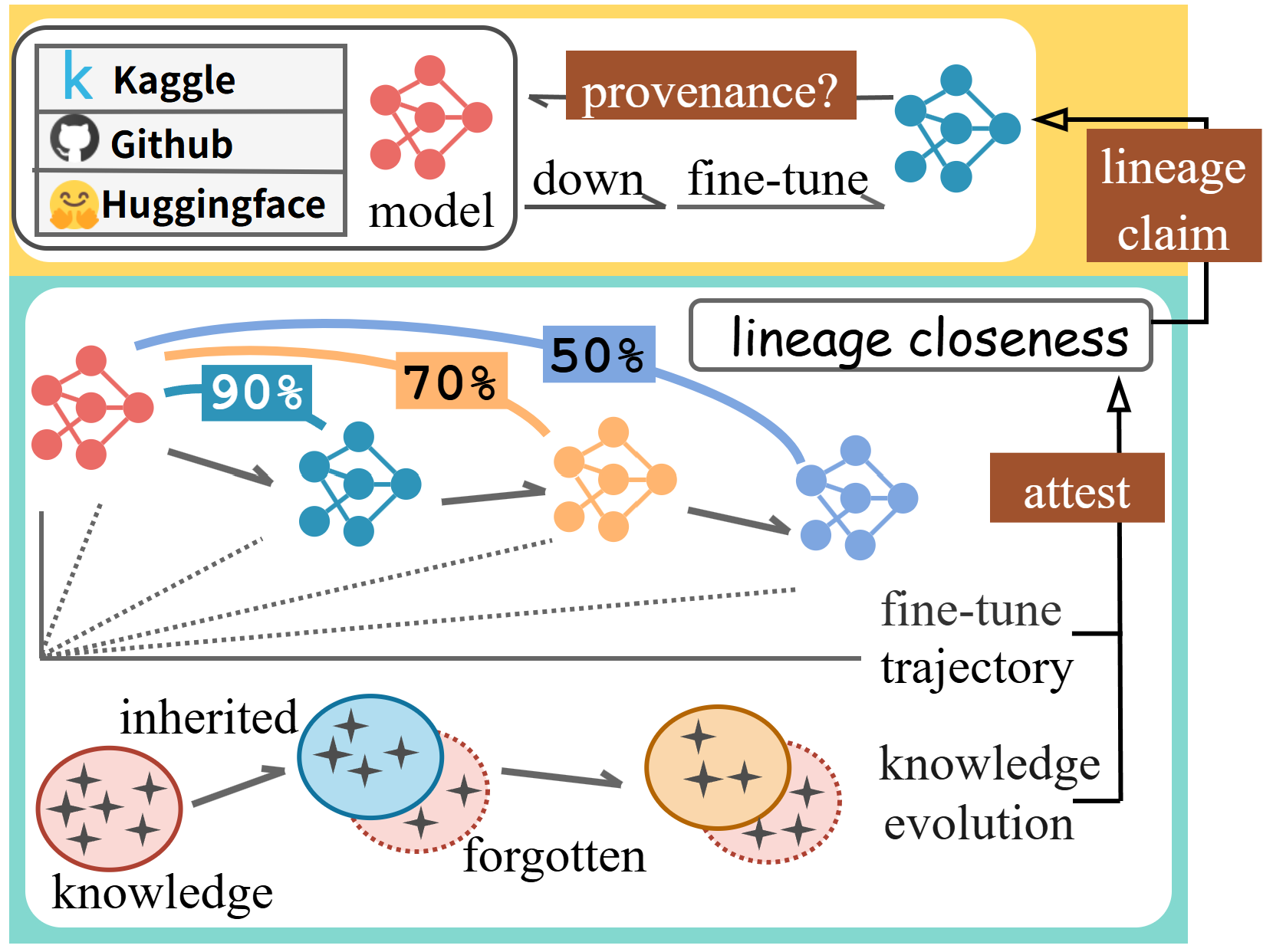}
	\caption{
		The knowledge evolution mechanism in model fine-tuning. The consistency between the knowledge of parent model and the total knowledge inherited and discarded during the evolution process is used to attest the lineage relationship.}
	\label{intro1}
\end{figure}

However, alongside the benefits of fine-tuning, new security challenges have emerged: (i) unauthorized redistribution via derivative models~\cite{FT-shield}, (ii) adversarial fine-tuning for harmful or unethical purposes~\cite{adversarialfinetuning}, and (iii) false claim of model provenance~\cite{fasleclaim}, especially in commercial settings. The rise of open-weight model repositories (e.g., GitHub and Hugging Face\cite{transformers}) ~\cite{modelzoo,modelzoo-2,hugginggpt,hugging-reuse}, has further amplified these risks, as current platforms often lack effective lineage attestation mechanisms. This creates opportunities for both inadvertent misuse and deliberate obfuscation of true model ancestry.

Regarding these security concerns, it is critically important to develop robust forensic techniques for model lineage attestation. Such tools are essential for intellectual property (IP) protection~\cite{SSLGuard,IPGuard,KRM,LLMCG}, model authentication~\cite{authentication}, and defect tracing~\cite{orignllms}. They can provide conclusive evidence of inheritance between the base and derived models, serving as a foundation for legal and ethical accountability. For example, if a model owner suspects that a publicly released model is derived from his proprietary model, he may first conduct behavioral comparisons through black-box access~\cite{baihe}. \textit{Upon gathering sufficient suspicion, the owner could request white-box access through legal channels to extract unique model-specific identifiers and assert ownership.} However, existing authentication methods struggle with model lineage verification due to the disruptive nature of fine-tuning on both parameters and behavioral traits.

Furthermore, recent advancements in model lineage detection~\cite{Lineage,NP} have yet to yield unambiguous, verifiable evidence of lineage. These approaches remain vulnerable to dynamic adversarial spoofing, as they primarily rely on static 
architectural similarities (feature-similarities or parameter similarities) between models. 
Such structural representations fail to capture the deep, dynamic transformations in a model’s internal knowledge representation during fine-tuning. As a result, these approaches often yield suboptimal attribution performance. Indeed, fine-tuning is not merely a process of parameter adjustment—it reflects a continuous evolution of model knowledge, guided by the semantics of the fine-tuning dataset. \textit{This highlights two key challenges for lineage attestation: (1) how to effectively characterize the intrinsic variations in model knowledge throughout the fine-tuning process, and (2) how to leverage the dynamic process of knowledge transformation for reliable and interpretable lineage attribution.}

To address the above two challenges, we propose to attest model lineage by exploring the knowledge evolution in a fine-tuning process. Our approach is inspired by biological evolution, which serves as a powerful metaphor for model lineage. In human genetics, genes—carriers of hereditary information—are inherited from parents, yet evolution entails a dynamic process where some genetic traits are preserved across generations while others are gradually forgotten. As illustrated in Fig.~\ref{intro1}, in the context of model evolution, knowledge functions like genes: the fine-tuned model inherits knowledge from its parent, while the fine-tuning process inevitably alters or forgets certain portions of the original knowledge. 
Therefore, lineage similarity can be measured by examining the consistency between the knowledge of parent model and the totality of inherited and discarded knowledge during the evolution process.

A model’s knowledge is structurally encoded in its parameters, enabling us to capture the knowledge variations in fine-tuning process at the parameter level. Fortunately, model editing techniques~\cite{taskari} allow the parameters of a fine-tuned (child) model to be decoupled into those of its parent model and a task vector derived by fine-tuning. The additional task vector can be instantiated as a neural network, referred to an evolution model, delineating knowledge variations caused by fine-tuning. 
\textcolor{blue}{
Then we can encode the intermediate or final responses of both the evolution model and the original/fine-tuned models into knowledge vectors using probing datasets.  The probing strategies are tailored to different model structures, including decision-boundary probing for classification models, distributional sampling for diffusion models, and  response-based probing for LLMs. }
Our intuition is that a model’s behavior on specific inputs reflects its internal knowledge, enabling an arithmetic way to measure the consistency of knowledge evolution with fine-tuning trajectory. 

In summary, we first  explicitly parameterize the knowledge modifications introduced during fine-tuning. Building on this, we propose an effective knowledge vectorization mechanism  to embed the edited knowledge into a shared latent space, that offers a principled foundation for model lineage attestation, where the consistency between the fine-tuning trajectory and the knowledge evolution path forms the necessary condition for validating model lineage relationships.

Our main contributions are summarized as follows:
\begin{itemize}
     \item Inspired by the genetic mechanism of human evolution, we uncover for the first time the knowledge evolution mechanism in the model fine-tuning process, grounding the effective model lineage verification in internal representational evidence.
    \item We propose a novel model lineage attestation methodology by exploring consisted knowledge evolution with fine-tuning trajectory.
    To verify the knowledge consistency, we devise a novel vectorization mechanism to embed the models' knowledge in a shared latent space, embracing a variety of model types. 
    Our methodology can capture the intrinsic genealogical links among models, outperforming those with only comparing superficial resemblance among models.  
    
	
	
	\item We perform comprehensive experimental evaluations for two prominent adversarial scenarios. Our method demonstrates robust lineage verification across a wide spectrum of model types, including classification models, diffusion models, and \textcolor{blue}{ diverse large language models, }
    showing strong resilience against real-world adversarial manipulations.
\end{itemize}

\section{Related Work}

\subsection{Neural Network IP Protection}
To detect unauthorized copies of protected models, several techniques are commonly employed for DL model IP protection. One kind of methods is model watermarking\cite{watermark2,watermarking,watermark3,Turning-watermark}, which involves embedding a detectable identifier into the neural network through processes such as fine-tuning or retraining.  They either embed secret information directly into the model weights or introduce backdoor attacks to manipulate training samples to deliberately alter the decision boundaries of the neural network to protect its intellectual property. However, both methods require modifying the original model's parameters. The former is vulnerable to changes caused by attacks such as pruning and fine-tuning, while the latter lacks robustness when facing downstream tasks that require altering decision boundaries, such as fine-tuning.
Another category of technique is model fingerprinting~\cite{IPGuard,ijcai-figure,figureprin-cvpr,plugm,llmmap}, which aims to identify a set of specific samples that trigger similar responses in both the original and pirated models. The classification boundary of a DNN classifier serves as its unique representation. Leveraging this insight, IPGuard~\cite{IPGuard} identifies data points close to the model owner's classification boundary and uses them to create a distinctive fingerprint for the classifier. If another DNN classifier produces matching predictions for the majority of these fingerprint data points, it is deemed to be an unauthorized copy of the original classifier. From the perspective of knowledge transfer in DL models, Tian et al.~\cite{KRM} observed the knowledge transferred from training data to a DL model can be uniquely represented by the model's decision boundary, and they further proposed that the distance from the central training samples to the boundary decision samples can serve as the model's fingerprint for its IP audit.  Although their scheme shows resistance to some fine-tuning attacks and is more robust than IPGuard\cite{IPGuard}, it remains vulnerable when fine-tuning fails to restore the model's decision boundary.

\subsection{Fine-tuning Relationship Detection}
At present, there is relatively limited research on detecting fine-tuning relationship among models. Typically, Neural Lineage~\cite{Lineage} and Neural Phylogeny~\cite{NP} are used to delineate the fine-tuning relationships. As a seminal work, Yu et al. \cite{Lineage} have defined the following branches of study on neural lineage:
neural lineage refers to the relationship between a parent model and its child model, specifically identifying which parent model a given child model has been fine-tuned from.  
In neural lineage, fine-tuning is defined as a standard transfer learning practice: adapting a model from the original dataset to a new task dataset, with the requirement that the fine-tuned model performs well on the new task. The neural lineage detection in \cite{Lineage} relies on access to fine-tuning datasets for validation. This requirement is impractical in real-world model protection scenarios, as malicious publishers are unlikely to disclose their datasets used for fine-tuning. To further identify the direction of  fine-tuning relationship, the concept of neural phylogeny detection is then proposed in the succeeding work of Yu et al. \cite{NP}. This method relies on static architectural similarities among fine-tuned models, which struggles to grasp the dynamic knowledge associations among models that truly characterizes the lineage relationships. In this work, for clarity in our discussion, we use the notation of "model lineage" to uniformly describe fine-tuning relationships between models, encompassing directed fine-tuning dependencies.

\subsection{Task Arithmetic for Model Editing}
Task Arithmetic~\cite{taskari,tangentspace,2024-model} has been explored as a powerful tool for analyzing and leveraging task relationships in multi-task learning.  
Its core idea is to treat tasks as vectors and explore how their combinations influence the behavior of the model. 
  One key advantage of task arithmetic lies in its ability to decouple knowledge components about the multi-task models. 
The performances of pre-trained models on different datasets can be independently modulated by adding or removing task vectors \cite{taskari}, where "addition" improves task-specific performance and "removal" may lead to forgetting. This model editing property provides an arithmetic approach to analyze fine-tuning dependencies between models, \textcolor{blue}{directly forming} technical basis for model lineage verification in the paper.

\section{\textcolor{blue}{Problem Statement and Threat Model}}
\label{threatmodel}

\subsection{Problem Statement}
\textcolor{blue}{
The purpose of Model Lineage Attestation (MLA) for DL models is to determine whether a given suspect model $f_C$ is a fine-tuned version of a parent  model $f_P$.  When the owner $P$ of the parent model $f_P$ discovers a suspected model $f_C$, he can ask a trusted authentication authority to verify the lineage relationship. The  attestation task is therefore to provide a reliable criterion that distinguishes models belonging to the same lineage from those that do not.}

\textcolor{blue}{
Specifically, the goal of MLA is to construct an appropriate \emph{lineage-aware metric space} in which models from the same family exhibit consistently high lineage similarity, while models from different families exhibit low lineage similarity. 
Under a decision threshold $T$, the attestation framework aims to induce a similarity metric $\mathcal{S}(\cdot,\cdot)$ space such that $f_C$ and $f_P$ belong to the same family if $\mathcal{S}(f_C,f_P) \;\ge\; T$.}

\subsection{Threat Model}

\textcolor{blue}{
We consider model lineage attestation under two key threat scenarios that focus on adversarial attempts to manipulate model lineage. They are likely often encountered in open-weight model hubs and we illustrate them in Fig.~\ref{fig:example}. To detail which assets each party 
may access or manipulate, a fine-grained summary of the capabilities of each role under these threat scenarios is offered in Table~\ref{tab:threat-scenarios}.}
\begin{figure}[htbp]
	\centering
	\includegraphics[width=.85\linewidth]{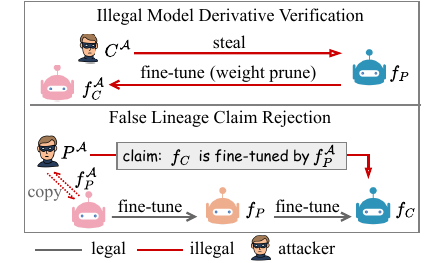}
	\caption{Two key threat scenarios for model lineage attestation. In the illegal derivation scenario, the adversary  steals the victim model $f_P$ and derives a new model $f_C^\mathcal{A}$  with direct fine-tuning or pruning followed by fine-tuning. In the false lineage claim scenario, the attacker attempts to present a closely related model $f_P^\mathcal{A}$ to falsely declare that a victim model $f_C$ was derived from it.
	}
	\label{fig:example}
\end{figure}

\begin{table*}[t]
\caption{\textcolor{blue}{Roles and accessible assets under different threat scenarios. In the Illegal Model Derivative setting, the adversary derives and modifies $f_C$ 
from a stolen parent model without access to the key evidence. 
In the False Lineage Claim setting, the adversary attempts to fabricate a lineage relationship by supplying forged evidence.}}
\centering
\renewcommand{\arraystretch}{1.2}
\setlength{\tabcolsep}{8pt}
\begin{adjustbox}{max width =0.9\textwidth}
\begin{tabular}{lcccc}
\toprule
 & $f_P$ & $f_C$ &Attest evidence \\
\midrule
Defender
& Submitted 
& Submitted 
& Trusted \& Held  \\
\midrule
Adversary (Illegal Model Derivative)
& \textcolor{red}{\Large \textbf{$\times$}} 
& Derived \& Modified 
& \textcolor{red}{\Large \textbf{$\times$}} \\
\midrule
Adversary (False Lineage Claim)
& Forged & \textcolor{red}{\Large \textbf{$\checkmark$}}
& Forged \\
\bottomrule
\end{tabular}
\label{tab:threat-scenarios}
\end{adjustbox}
\end{table*}

\subsubsection{Illegal Model Derivative Verification}
\label{sec:threat1}
\textcolor{blue}{
A model owner trains and deploys a proprietary model $f_P$, while an adversary $C_\mathcal{A}$ may steal it and release a tampered or fine-tuned variant $f_C^\mathcal{A}$. The model owner therefore requires a reliable attestation mechanism to verify lineage relationships.}

\textcolor{blue}{
\textit{Adversary’s Assumptions.}  
The adversary is assumed to have \textit{white-box access} to the victim model $f_P$, including its architecture and parameters, with sufficient resources to perform fine-tuning. For complex models such as large language models or diffusion models, they may apply advanced fine-tuning techniques such as DreamBooth~\cite{dreambooth}, LoRA~\cite{lora}, or parameter merging to modify the stolen model.  Moreover, the adversary may be aware of the lineage attestation technique  and further adopt evasion strategies like weight pruning or fine-tuning across multiple generations to obscure lineage. However, $C_\mathcal{A}$ does \textit{not} have access to the original training data or configurations of $f_P$, which constrains their ability to perfectly reproduce the model’s provenance.}

\textcolor{blue}{
\textit{Defender’s Assumptions.}  
We assume the defender is a trusted lineage attestation authority whose goal is to verify whether a suspicious model $f_C^\mathcal{A}$ is derived from a proprietary model $f_P$. The authority may receive $f_P$ and any auxiliary evidence(probe dataset and the initialize parameters) provided by the model owner. 
The defender can obtain the parameters of $f_C^\mathcal{A}$ legally  by filing a complaint to a judicial body. 
According to the principle ``he who asserts must prove,'' it is reasonable for the claimant to provide auxiliary evidence to substantiate ownership. 
 The authority has no knowledge of the adversary’s fine-tuning strategies or modifications. Under these assumptions, the attestation mechanism can determine the lineage relationship between $f_C^\mathcal{A}$ and $f_P$.}

\subsubsection{\textcolor{blue}{False Lineage Claim Rejection}}
\label{sec:threat2}
\textcolor{blue}{An adversary $P_\mathcal{A}$ may attempt to falsely claim that an existing victim model $f_C$ was derived from their model $f_P^\mathcal{A}$, aiming to gain undeserved credit. Consequently, a fair lineage attestation authority would reject any unfounded claims.}

\textcolor{blue}{
\textit{Adversary Assumptions.}
The adversary may attempt to gain undeserved credit by presenting a model $f_P^\mathcal{A}$ as the parent of a victim model $f_C$. To do so, they may steal or copy a close-lineage model of $f_C$ within the same family, such as a grandparent model, along with its corresponding auxiliary evidence, such as a probe dataset or initialization parameters.  The adversary has white-box access to $f_C$, but lacks knowledge of its training configuration. They cannot access the true parent model $f_P$  of $f_C$ and are unaware of the lineage closeness between $f_P^\mathcal{A}$ and $f_C$.  Parent-side artifacts are 
maintained by the model platforms, which prevents the
adversary from altering or fabricating these artifacts, including the initialization $\theta_0$.
  }

\textcolor{blue}{
\textit{Defender’s Assumptions.} 
The defender (lineage attestation authority) is aware of $f_C$'s parameters and also its training configuration, and he has also white-box access to $f_P^\mathcal{A}$ legally.
By comparing the lineage similarity between $f_P^\mathcal{A}$ and $f_C$, the lineage verification can identify assertions where $f_C$ and $f_P^\mathcal{A}$ do not share a direct parent-child relationship.}

\section{Knowledge Evolution in Model Fine-tuning}
Fine-tuning a neural network encapsulates knowledge into parameters through a process of selection and consolidation
 in the training, where the child model selectively inherits knowledge from the parent model, guided by the objective of the fine-tuning task. Critically, model editing theory reveals that fine-tuned parameters
remain task-disentangled—enabling us to systematically analyze the knowledge evolution process through parameter dynamics.

\textbf{Characterizing Knowledge Evolution through Task Editing.} \textcolor{blue}{
Since neural network parameters can be decomposed into distinct components corresponding to different training tasks~\cite{taskari,ICLRnonlinear}, the fine-tuning trajectory can be similarly characterized by changes in these parameters, even under nonlinear optimization dynamics~\cite{ICLRnonlinear}.}
Given a submodel \( f_C(\cdot;\theta_C) \) with parameters \(\theta_C\), and its parent model \( f_P(\cdot;\theta_P) \) with parameters \(\theta_P\), the parameter difference from $\theta_C$ to $\theta_P$ during fine-tuning can be denoted as the task vector \(\Delta = \theta_P - \theta_C\). \(\Delta\) effectively encodes the model evolution process introduced by retraining, which provides a solution to quantify the unknown knowledge evolution during fine-tuning process.  
Therefore, we can find a surrogate model $f_\Delta(\cdot;\theta_\Delta)$ (called evolution model) with parameters $\theta_\Delta$ 
that encodes the evolutionary knowledge offset during the fine-tuning process. This model is constructed within the same search space spanned by the initial parameters $\theta_0$ of $f_P$ as:
\begin{equation}  
	\label{theta_delta}
         \vspace{-0.1cm}
	\theta_\Delta = \theta_0 + \theta_P -\theta_C.
     \vspace{-0.1cm}
\end{equation} 

Therefore, for a submodel \(f_C \) that is fine-tuned from $f_P$, \(f_C \)'s behavior explained from the perspective of the parent model's knowledge is influenced by two parts: one is the direct knowledge transfer from the parent model, the other is the unknown knowledge evolution during the fine-tuning process. Thus, from the perspective of knowledge inheritance, knowledge consistency among these models can be expressed as:
 \begin{equation}
 	\label{kncons}
 	\mathcal{K}_P(f_P) = \mathcal{K}_P(f_C) + \mathcal{K}_P(f_\Delta), 
 \end{equation}
where \(\mathcal{K}_P\) denotes the specific knowledge encoded in \( f_P \), which is tied to its parameters. 
Specifically, \(\mathcal{K}_P(f_P)\) refers to $f_P$'s understanding of the learned knowledge from $f_P$'s training dataset $D_P$, and 
\(\mathcal{K}_P(f_C)\) indicates the child model's observation of the parent model's knowledge.  Due to the impact of fine-tuning, the child model inevitably forgets or alters the knowledge of the parent model,
which is denoted as \(\mathcal{K}_P(f_\Delta)\).

\textbf{Consistent Knowledge Evolution with Fine-tuning Trajectory.} 
No matter how descendants evolve, a family lineage always traces back to the same ancestral origin, as encoded in their shared genes. Similarly, if a child model $f_C$ is fine-tuned from a parent model $f_P$, no matter what downstream task is used for fine-tuning, its initialization parameters are necessarily determined by the parent model, even minor changes in the initialization parameters can lead to divergent convergence paths ~\cite{init}. Therefore, the training trajectory of a model family is unique and difficult to replicate.

\textcolor{blue}{
 Although models from distinct lineages may exhibit similar knowledge due to potential overlaps in their training datasets, Theorem~\ref{thealign} claims that the \(\mathcal{K}_P\) knowledge shift that consists with the fine-tuning trajectory described by model parameters modification is an inherent property of model fine-tuning, and it accordingly provides a intrinsic signature for the model lineage attestation.}

\newtheorem{theorem}{Theorem}   
\begin{theorem}  
\label{thealign}  
For a pre-trained model \(f_P\), a fine-tuned model \(f_C\), and the generated model \(f_{\Delta}\) calculated through task arithmetic by Eq.~(\ref{theta_delta}), that the \(\mathcal{K}_P\) knowledge shift from \(f_P\) to \(f_C\) is consistent with the evolved \(\mathcal{K}_P\) knowledge in \(f_{\Delta}\) is a necessary condition for confirming that \(f_C\) is fine-tuned from \(f_P\).  
\end{theorem}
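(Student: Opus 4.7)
The plan is to prove this in the natural direction: assume $f_C$ is obtained by fine-tuning $f_P$, and then derive the identity $\mathcal{K}_P(f_P)=\mathcal{K}_P(f_C)+\mathcal{K}_P(f_\Delta)$ of Eq.~(\ref{kncons}); showing this implication is exactly what ``necessary condition'' requires. The only raw material the statement gives me is the parameter triple $(\theta_0,\theta_P,\theta_C)$ together with the derived vector $\theta_\Delta=\theta_0+\theta_P-\theta_C$, so the argument must lift an exact parameter-space identity to an additive knowledge-space identity under the task-arithmetic regime the paper has already invoked.

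First I would pin down the meaning of $\mathcal{K}_P(f)$ used in the proof: treat $\mathcal{K}_P(\cdot)$ as the image of a model under a fixed linear ``knowledge functional'' evaluated on probe inputs tied to $D_P$, consistent with the vectorization mechanism the paper introduces later. Linearity of this functional in the model's response map is the key technical hook the rest of the proof will rely on, and I would record it explicitly as an assumption rather than bury it. I would then rewrite Eq.~(\ref{theta_delta}) as
\[
\theta_P-\theta_0 \;=\; (\theta_C-\theta_0)\;+\;(\theta_\Delta-\theta_0),
\]
so the task-vector from the shared initialization to the parent decomposes exactly into the task-vector to the child plus the task-vector to the evolution model. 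Invoking the weak-nonlinearity / tangent-space property of fine-tuning established in~\cite{taskari,tangentspace,ICLRnonlinear}, the functional behavior of a fine-tuned model on its pre-training-relevant probe set is (to first order) additive in the task vector; composing this with the linearity of $\mathcal{K}_P$ yields
\[
\mathcal{K}_P(f_P)-\mathcal{K}_P(f_0) \;=\; \bigl(\mathcal{K}_P(f_C)-\mathcal{K}_P(f_0)\bigr) + \bigl(\mathcal{K}_P(f_\Delta)-\mathcal{K}_P(f_0)\bigr),
\]
and absorbing the common $\mathcal{K}_P(f_0)$ baseline into the additive convention of Eq.~(\ref{kncons}) (equivalently, re-centering the knowledge functional at the initialization) recovers the claimed identity.

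The hard part will be that last step: legitimately promoting a parameter-space decomposition to a knowledge-space decomposition, since neural networks are not globally linear in their weights. I expect to spend most of the write-up justifying this via one of two routes: (i) a linearized / tangent-space fine-tuning assumption under which the identity is exact, or (ii) an approximate argument bounding the residual whenever the task vectors lie in a flat, low-curvature basin, with an explicit remainder term. A secondary subtlety is ensuring the probe set that instantiates $\mathcal{K}_P$ is identical across $f_P$, $f_C$, and $f_\Delta$ so that the additivity is even well-posed; I would fix this by stipulating that the probe distribution is supplied once by the owner of $f_P$ and evaluated identically on all three models. With these pieces in place, the necessary-condition direction stated in Theorem~\ref{thealign} follows.
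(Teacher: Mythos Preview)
Your reading of ``necessary condition'' is the textbook one---assume $f_C$ is fine-tuned from $f_P$ and derive Eq.~(\ref{kncons})---and your task-vector decomposition $\theta_P-\theta_0=(\theta_C-\theta_0)+(\theta_\Delta-\theta_0)$ lifted to knowledge space via linearization is a clean way to get there. The paper, however, argues the \emph{other} implication: it proceeds by contradiction, supposing a non-descendant $f_C'$ also satisfies Eq.~(\ref{kncons}), then observes that $\Delta'=\theta_P-\theta'_C$ lies off any task-optimization trajectory and therefore violates the weight-decoupling premise of task arithmetic, so $f'_\Delta$ cannot act as a valid evolution model and Eq.~(\ref{kncons}) is impossible. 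The paper's conclusion is ``Eq.~(\ref{kncons}) holds only if $f_C$ is a descendant,'' i.e.\ the attestation test has no false positives; your conclusion is that it has no false negatives. Your route is more constructive and makes the linearity/tangent-space assumption explicit (including the baseline re-centering, which you rightly flag as a convention), whereas the paper treats the off-trajectory failure of weight-decoupling essentially as an axiom inherited from~\cite{taskari,ICLRnonlinear}. Both directions are useful for the framework, but be aware that your plan does not reproduce what the paper's own proof actually demonstrates.
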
  
\begin{proof}  
\textcolor{blue}{
For any true descendant, the parameter update $\Delta=\theta_P-\theta_C$ lies on the task-optimization trajectory and thus satisfies (or approximately satisfies) the task-aligned weight-decoupling structure required for task arithmetic \cite{taskari,ICLRnonlinear,2024-model}.
Assume, for contradiction, that a non-descendant model $f_C'$ with weight $\theta'_{C}$ also satisfies Eq.~(\ref{kncons}). Since the parameter difference $\Delta'=\theta_P-\theta'_{C}$ is not generated by fine-tuning the parent on any task, it is not on the task optimization path. Thus, it cannot satisfy the weight-decoupling property, as it violates the fundamental assumption of task arithmetic. Accordingly, $f'_{\Delta}$ fails to act as a valid evolution model explaining the parent knowledge discrepancy among $f_C'$ and $f_P$, making Eq.~(\ref{kncons}) impossible. Thus, Eq.~(\ref{kncons}) holds only if $f_C$ is a fine-tuned descendant of $f_P$.}
\end{proof}



\textcolor{blue}{Importantly, this argument extends beyond the idealized linear fine-tuning case.  Realistic fine-tuning may update parameters non-linearly, but it preserves the alignment of $\Delta$ with the parent-task knowledge direction, which is the structural property required by Eq.~(\ref{kncons}). In contrast, $\Delta'$ from a non-descendant achieves no such task-induced alignment and cannot satisfy even approximate weight-decoupling. Because the directional alignment is maintained only for true descendants and absent for non-descendants, Eq.~(\ref{kncons}) continues to hold under realistic fine-tuning dynamics.}



\section{Model Lineage Attestation}

\begin{figure*}
\centering
	\begin{minipage}[ht]{.99\linewidth}
		\centering
        \includegraphics[width=.95\linewidth]{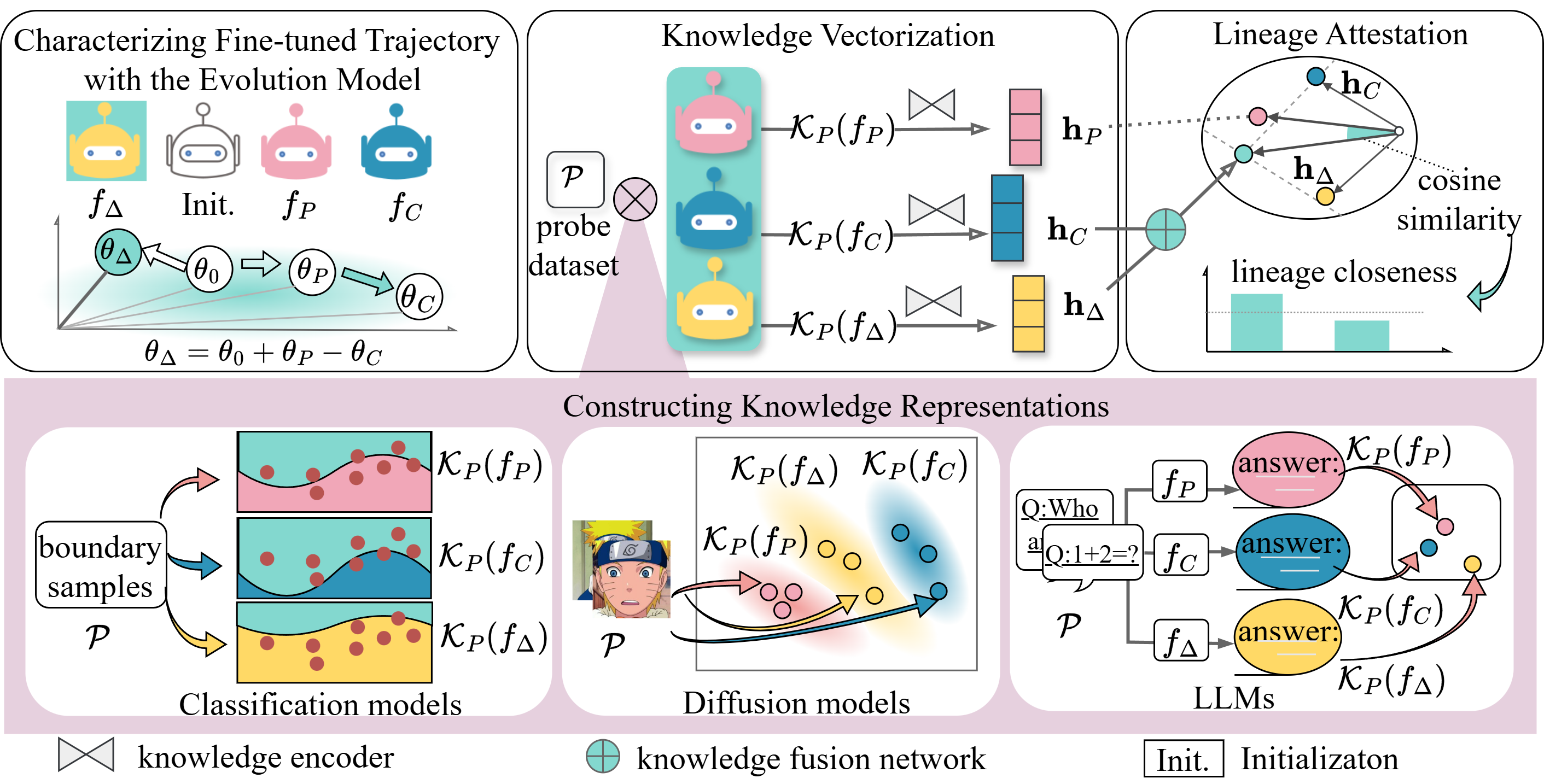}
	\end{minipage} 
	\caption{The proposed model lineage attestation (MLA) framework consists of three key components. 
        First, given a parent model $f_P$ and its fine-tuned model $f_C$, MLA constructs an evolution model $f_{\Delta}$ to characterize knowledge evolution during fine-tuning. 
        Then, all three models are encoded into vector representations ($\mathbf{h}_P$, $\mathbf{h}_{\Delta}$, $\mathbf{h}_C$) through a shared knowledge encoder.
        Finally, the lineage attestation is performed by evaluating the knowledge consistency condition in Equation (\ref{kncons}).}
	\label{method}
\end{figure*}

To obtain a more accurate lineage similarity $\mathcal{S}(f_C,f_P)$, we devise a model lineage attestation framework by examining the consistency of knowledge evolution along the fine-tuning trajectory, as shown in Fig.~\ref{method}. We first parameterize the fine-tuning trajectory as a virtual neural network inspired by task arithmetic theory. Then, we develop an effective knowledge vectorization method, embedding neural network knowledge into a shared latent space for exploring knowledge consistency. The resulting vector space enables us to reliably verify the lineage relationship \textcolor{blue}{between the original and fine-tuned models} by comparing the resulting knowledge consistency score with a specific threshold~$T$.

\subsection{Constructing Evolution Model $f_\Delta$}  
To depict the process of knowledge evolution for model lineage attestation, the prerequisite is to construct an evolution model $f_\Delta$ with model editing theory. To this end, we reinterpret the parameter transformation between two models as a neural network $f_\Delta(\cdot;\theta_\Delta)$ in terms of Eq.~(\ref{theta_delta}). Thus, $f_\Delta$ effectively characterizes the fine-tuned trajectory, and the forward inference response on \(D_p\) can be leveraged to probe the changes in its knowledge about \(D_p\) induced by fine-tuning.

\subsection{Model Knowledge Vectorization}
To refine the parametric knowledge in the model, we propose a novel probing-based knowledge vectorization methodology that captures the model's decision or functional behaviors associated with the probing dataset. This methodology offers an effective mechanism for representing the knowledge in model evolutions that strongly associates with parent model's training dataset. Given the variations across different model types, we devise specialized knowledge vectorization approaches tailored to each type of model. \textcolor{blue}{
 For classification models, knowledge resides in structured decision boundaries learned from the training dataset. For diffusion models, knowledge is the learned data distribution. We probe by sampling from this distribution to test generative fidelity. Due to tokenizer-induced misalignment, we probe LLM knowledge via behavior-level responses on multi-domain benchmarks.
}
Despite differences in model types, the key steps in vectorizing models remain the same: (1) creating a probing dataset derived from the parent model’s training data, and (2) encoding the model’s decision/response patterns from this dataset into knowledge vectors.

\subsubsection{Knowledge Vectorization of Classification Models}
The knowledge learned by a classification model during training is implicitly captured by its decision boundaries, which delineate class-specific regions in the input space. 

\textbf{Constructing Representative Probe Samples.}
The decision boundary of a neural network encodes its unique learned knowledge, such that samples lying close to this boundary offer a concise representation of the knowledge it has acquired~\cite{KRM}.
Since fine-tuning typically alters the parent model’s decision boundary, the child model’s responses to these boundary samples naturally reflect the changes induced by fine-tuning. In particular, by examining the embeddings or predictions on the same probe dataset, we can infer how the original knowledge has been preserved, modified, or lost.  On the other hand, centroid samples capture the core semantics of each class. Accordingly, we construct a relatively small probe dataset \(\mathcal{P}\) consisting of boundary samples and centroid samples derived from \(f_P\).

Assume the parent model \(f_P = g(\epsilon(\cdot))\) is a $k$-class DNN classifier, where \(\epsilon(\cdot)\) denotes the feature extractor and \(g(\cdot)\) denotes the classification head.  
Formally, let $ \{g_i\}$ denote the decision functions of the target classifier of class $i$, 
and a data point $x$ is on the target classifier’s classification boundary if at least two labels have the largest discrimination probability, i.e., $ g_j(x) = g_{j'}(x) \geq \max\limits_{i \neq j,j'} g_i(x)$, where $i,j,j'$ are the category indexes, and $g_i(x)$ is the probability that sample $x$ belongs to category $i$ ~\cite{IPGuard}. Given the  training dataset \( D_P \), we construct the probe sample set $\mathcal{P}$ as:
\begin{equation}
\vspace{-0.1cm}
	\mathcal{P} = \{x_0^0, x_0^1, x_0^2, \dots, x_i^i, x_i^j, \dots, x_k^k\},
\label{equation:input}
    \vspace{-0.1cm}
\end{equation}
where \( x_i^i \) is the centroid sample of class \( i \), and \( x_i^j \) (with \( i \ne j \)) denotes a boundary sample that transits from class \( i \) to \( j \).


%


%

\textbf{Extracting Feature-Encoded Knowledge via Decision Boundary Probes.}  To analyze how the parent model’s knowledge evolves through fine-tuning, we begin by examining the feature embeddings that the parent \(f_P\), the evolution model \(f_\Delta\) and the fine-tuned child \(f_C\) extract from the carefully chosen samples.
The feature embeddings produced for these samples reflect the model’s ability about the corresponding knowledge,
thereby providing an indirect yet effective means to analyze the specific knowledge relationship. 
Therefore, the parent model’s knowledge $\mathcal{K}_P(f_P)$ with respect to $\mathcal{P}$ is represented by:
\begin{equation}
\vspace{-0.1cm}
	\mathcal{K}_P(f_P) = \big\{\epsilon_P(x) \;\big|\; x \in \mathcal{P} \,\big\}.
    \vspace{-0.1cm}
\end{equation}

Fine-tuning typically alters a model’s decision boundary.  
We can thus capture the inherited knowledge of the child model by observing its behavior on the parent-derived probe set \(\mathcal{P}\).  
Specifically, the positions of the parent model’s boundary samples within the child model reflect how fine-tuning has modified the decision regions.   
Accordingly, the \(f_C\)’s inherited knowledge is represented by:
\begin{equation}
\vspace{-0.1cm}
\mathcal{K}_P(f_C) = \left\{\epsilon_C(x) \;\middle|\; x \in \mathcal{P} \right\},
\vspace{-0.1cm}
\end{equation}
where  \(\epsilon_C(\cdot)\) denote the child model’s feature extractor.  

Similarly, the evolution model \(f_\Delta\) characterizes the transformation from the parent model to the child model, and  
its behavior on the parent-derived probe set \(\mathcal{P}\) captures knowledge that has been lost or modified during fine-tuning.  
Formally, the knowledge representation $\mathcal{K}_P(f_\Delta)$ is defined as:
\begin{equation}
\vspace{-0.1cm}
\mathcal{K}_P(f_\Delta) = \big\{\epsilon_\Delta(x) \;\big|\; x \in \mathcal{P} \,\big\}.
\vspace{-0.1cm}
\end{equation}

\textbf{Encoding Feature Knowledge into Vectors.} 
To encode semantic dependencies across categories, we construct a knowledge encoder $\Psi$,  in which categories are modeled as sequences and each class occupies a unique position. 
Feature embeddings are first projected into a shared latent space via a multi-layer perceptron (MLP) block \( \mathcal{M} \). A Transformer encoder \( \mathcal{T} \) is then applied \( k \) times—once per class—taking the centroid and corresponding boundary samples as input to capture local inter-class relations. The last-token embedding from each sequence is used as the class-specific knowledge vector, and the final representation is obtained by averaging these vectors via a global pooling layer \( \mathcal{G} \).
This forms a hierarchical knowledge encoder denoted as $\Psi = \mathcal{G}(\mathcal{T}(\mathcal{M}(\cdot)))$.
Formally,  for model $f_F \in \{f_P, f_C, f_\Delta\}$, we then obtain the compact knowledge vectors by applying the hierarchical knowledge encoder $\Psi$:
\begin{equation}
	\label{eq8}
\mathbf{h}_F = \Psi(\mathcal{K}_P(f_F)), \quad F \in \{P, C, \Delta\}.
\end{equation}

\subsubsection{Knowledge Vectorization for  Diffusion Models and LLMs} 
During the forward inference process, the model encodes each input into a semantically rich embedding in its intermediate layers. For multi-domain models, such as diffusion models and large language models, which are trained on massive datasets, it is nearly impossible to access the full training data.  Following prior work~\cite{knowledgeconsis}, the model’s knowledge can be defined as the semantic concepts encoded in its feature representations of the probe inputs at a given layer, since these representations reflect what the model has learned to distinguish the query input. \textcolor{blue}{However, internal probes (logits or representations) become infeasible when child models alter tokenizers for LLMs, causing internal representation misalignment. We therefore evaluate the consistency of model knowledge by comparing their responses to the same set of probe inputs~\cite{LLM}, which serves as a robust proxy for internal knowledge.}

\textbf{Constructing Feature Knowledge Vector for Diffusion Models.} 
For diffusion models, we create a probing dataset, $\mathcal{P} = \{x_1, x_2, \dots, x_n\}$, which consists of $n$ randomly selected images from $D_P$. The generative capabilities of these models are fundamentally driven by the denoising backbone of the U-Net architecture~\cite{freeu}. Different diffusion models, however, often utilize U-Nets with significantly varied architectures or training objectives, with these differences being substantial across model families. These architectural and objective variations result in distinct intermediate feature representations when given the same input data. Therefore, we leverage the output features of the \texttt{UpBlock} layers within the U-Net as meaningful embeddings to capture this unique knowledge. Let $\epsilon(\cdot)$ denotes the feature extraction process from the final \texttt{UpBlock} layer of the models, we can instantiate the knowledge, resulting in $\mathcal{K}_P(f_\Delta)$, $\mathcal{K}_P(f_C)$, and $\mathcal{K}_P(f_P)$ as previously established.

To transform the extracted features into compact knowledge representations, we apply a knowledge encoder \( \Psi \) that aligns and aggregates feature vectors. Due to the spatial nature of extracted features, we first apply a convolutional layer \( \mathcal{C} \) to convert the 3D features into vectors, followed by global average pooling \( \mathcal{G} \), and an MLP block \( \mathcal{M} \) for alignment: $\Psi = \mathcal{M}(\mathcal{G}(\mathcal{C}(\cdot)))$.
Finally, the knowledge vectors $\mathbf{h}_P,\mathbf{h}_C, \mathbf{h}_\Delta$ for each model is obtained as Eq.~(\ref{eq8}).

\textbf{Acquiring Response-based Knowledge Vector for LLMs.} 
We estimate the model  knowledge by analyzing the responses of the parent model$f_P$, child model $f_C$, and evolution model $f_\Delta$ to the same inputs. 
The probe dataset is defined as \( \mathcal{P} = \{ x_1, x_2, \dots, x_n \} \), where each query \( x_i \) is used to elicit responses from the models \( f_P \), \( f_C \), and \( f_\Delta \).  To ensure diverse coverage  and comprehensively acquiring a model’s knowledge, $\mathcal{P}$ spans multiple domains—such as general Q\&A, mathematics, and reasoning. 
To account for the  randomness in LLM generation, 5 responses are collected for each question via independent inference.	
Therefore, 
for a model $f_F \in \{f_P, f_C, f_\Delta\}$, we define its knowledge representation over $\mathcal{P}$ as:
\begin{equation}
\vspace{-0.1cm}
	\begin{split}
		\mathcal{K}_P(f_F) &= \big\{f_P(x)^{(r)} \;\big|\; x \in \mathcal{P},\; r = 1,\dots,5 \big\}, \\
	\end{split}
    \vspace{-0.1cm}
\end{equation}
where $f_P(x)^{(r)}$ denotes the encoded embedding of the $r$-th response to query $x$ encoded by the parent model's encoder.

Since LLM progressively refines input representations through stacked layers, and the final hidden states produce a contextualized embedding that captures its understanding of the input prompt, which is subsequently decoded into a response. To obtain a concise representation of a model’s knowledge, we transform the generated responses to each input into continuous embeddings, which act as compact knowledge vectors. In detail, all responses are encoded using the parent model’s encoder $\epsilon_P$ to remove semantic differences across models and map them into a common embedding space, enabling us to better observe how knowledge from the parent model is preserved or transformed in the child and evolution models. For each model $f_F$,
we then aggregate the resulting embeddings using the knowledge encoder \(\Psi\) to obtain the final knowledge vector $\mathbf{h}_F$ of the model:
\[
\vspace{-0.1cm}
\mathbf{h}_F = \Psi(\epsilon_P(\mathcal{K}_P(f_F))),
\quad F \in \{P, C, \Delta\},
\vspace{-0.1cm}
\]
where $\Psi = \mathcal{G}(\mathcal{M}(\cdot))$ consisting of an MLP layer \( \mathcal{M} \) for refining and aligning the initial response-based embeddings and a global average pooling layer \( \mathcal{G} \) for aggregating them into the compact knowledge representation.


\subsection{Model Lineage Attestation}

In the vectorized metric space, the knowledge vectors \( \mathbf{h}_C \), \( \mathbf{h}_{\Delta} \), and \( \mathbf{h}_{P} \) for a fine-tuning process should align with Eq.~(\ref{kncons}), such that the fusion vector of \( \mathbf{h}_C \) and \( \mathbf{h}_\Delta \) is aligned with \( \mathbf{h}_P \).
However, practical limitations arise because of the imperfect knowledge extraction.
Firstly, from the perspective of task arithmetic, $f_\Delta$ is constructed along the fine-tuning direction guided by the fine-tuning dataset, and is therefore semantically misaligned with the probe dataset, since the latter is semantically aligned with the parent model.
Secondly, the evolution model $f_\Delta$  generated through model editing only approximates the model's optimization trajectory and does not perfectly represent the changed knowledge.
Consequently, the extracted feature embeddings may contain noise and fail to fully preserve the knowledge of the parent model.

\textbf{Constructing Knowledge Alignment Space.}
To mitigate these limitations and refine the approximation, we introduce a trainable knowledge fusion network $\Phi$, which is employed as a linear fully connected network, to further aggregate the knowledge in $\mathbf{h}_C$ and $\mathbf{h}_\Delta$ and enable knowledge-consistency alignment.
$\Phi$ is designed as:
\begin{equation}
\vspace{-0.1cm}
	\label{eq:phi}
	\Phi(\mathbf{h}_C, \mathbf{h}_\Delta) = \mathcal{M}(\mathbf{h}_C  \mathbin\Vert \mathbf{h}_\Delta),
    \vspace{-0.1cm}
\end{equation}
where  \( \mathbin\Vert \) denotes the concatenation operation, and \( \mathcal{M} \) is a learnable MLP block that operates on the combined vector.

Overall, we design the attestation framework by jointly optimizing the parameters $\theta_\Psi$ of knowledge vectorization network \(\Psi\) and $\theta_\Phi$ of knowledge fusion network \(\Phi\) to construct a metric space that preserves knowledge consistency between parent and child models. In this space, a higher similarity score indicates greater confidence that the fused knowledge \(\Phi(\mathbf{h}_C,\mathbf{h}_\Delta)\) aligns with the parent model’s knowledge \(\mathbf{h}_P\).

Specifically, true parent–child pairs and non-direct pairs are used as positive and negative samples, respectively, to learn the optimal parameters $\theta_\Phi^{*}$ and $\theta_\Psi^{*}$ as: 
\begin{equation}
	\begin{aligned}
		(\theta^*_\Phi, \theta^*_\Psi) &= \arg\min_{\theta_\Phi,\theta_\Psi} \mathcal{L}, \\
		\mathcal{L} =
		\,\texttt{sim}(\mathbf{h}_P, \Phi(\mathbf{h}_C, \mathbf{h}_\Delta))&-\,\texttt{sim}(\mathbf{h}_P, \Phi(\mathbf{h}'_C, \mathbf{h}'_\Delta))+m,
	\end{aligned}
\end{equation}
where \(\mathbf{h}'_C\) and \(\mathbf{h}'_\Delta\) denote the embeddings of a negative child model and its corresponding negative evolutionary direction, respectively, and \(\texttt{sim}(\cdot,\cdot)\) denotes cosine similarity. The margin \(m=0.2\) imposes a minimum distance between positive and negative pairs, thereby encouraging a more discriminative knowledge alignment space.

Finally, given an unknown parent–child model pair \((f_P,f_C)\), the knowledge consistency score is calculated as:
\begin{equation}
\vspace{-0.1cm}
	\label{eq53}
	\mathcal{S}(f_P,f_C) = \texttt{sim}(\mathbf{h}_P, \Phi(\mathbf{h}_C, \mathbf{h}_\Delta)).
    \vspace{-0.1cm}
\end{equation}
Then the lineage can be claimed
if and only if $\mathcal{S} (f_P,f_C)$ is above a pre-defined threshold $T$.

\textbf{Analysis of Cross-Generational Knowledge Alignment.} 
Models that undergo multiple successive fine-tuning stages often exhibit across-generational lineage (e.g., grandparent–grandchild), which can also be assessed via Eq.~(\ref{eq53}). However, as fine-tuning progresses, knowledge inherited from the parent model gradually diminishes, making lineage attestation increasingly difficult—especially in skip-generation cases where semantic alignment weakens. 

Consider a model \( f_P \) fine-tuned to \( f_{C1} \), and then to \( f_{C2} \). Let \( f_{\Delta} = \theta_1 + \theta_2 +\theta_0\) denote the evolution model obtained by task arithmetic, where \( \theta_1 \) and \( \theta_2 \) are task vectors of the two stages, respectively, $\theta_0$ is the initial parameters of $f_P$. As fine-tuning progresses across generations, the child model \( f_{C2} \) inherits progressively less knowledge from \( f_P \), and more of the original information is changed, which is captured by the task-vector corresponding to the evolution model \( f_\Delta \). Therefore, the information loss caused by the semantic inconsistency between the evolution model and the parent model, as well as the alignment bias introduced by the mismatch between model editing and the true optimization trajectory, accumulates progressively across generations.

This parameter-level accumulation of task vectors implies that the fusion knowledge vector  \( \Phi(\mathbf{h}_C, \mathbf{h}_\Delta) \) 
increasingly deviates from the parent model's original knowledge vector \( \mathbf{h}_P \). 
The degree of this semantic drift is positively correlated with the number of fine-tuning generations.
As a result, the similarity score between \( \mathbf{h}_P \) and \( \Phi(\mathbf{h}_C, \mathbf{h}_\Delta) \) for skip-generation relationships is expected to gradually decrease as the generational distance increases.

\begin{figure*}
	\begin{minipage}[t]{.97\linewidth}
		\centering
		\includegraphics[width=.99\linewidth]{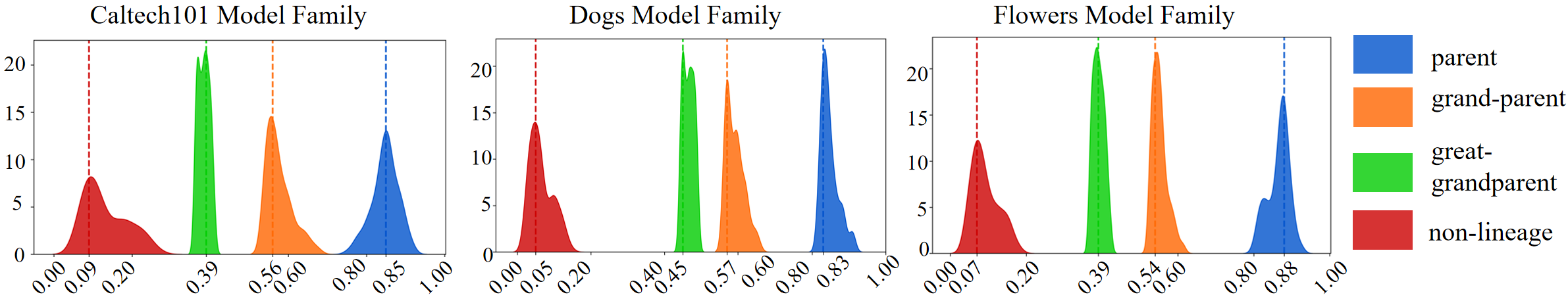}
	\end{minipage} 
	\caption{Kernel density estimation(KDE)
		of similarity scores for different lineage relations.	Each subplot shows the similarity distribution between a child model and its parent, grandparent, great-grandparent and non-lineage items, respectively. True parent-child pairs exhibit distinctly higher similarity, enabling effective rejection of false lineage claims.
		}
	\label{falseclaimkde}
    \vspace{-0.2cm}
\end{figure*}

\textbf{How to determine the Decision Threshold $T$?}
To explore an appropriate decision threshold \( T \) of lineage closeness, we construct multiple four-generation model families based on the ResNet-18 ~\cite{res18}, MobileNet-V2~\cite{mobilenet}, Diffusion-V2~\cite{dreambooth}, and LLaMA-3~\cite{llama3} architectures. 
Fig.~\ref{falseclaimkde} shows the kernel density estimation (KDE) of knowledge alignment similarity scores across different lineage relations of ResNet-18 families, and the other
results can be found in Fig.~\ref{diedaiarchi} of Appendix~\ref{ftroundsandarchi}. Each subplot originating from parent models trained on  Caltech101~\cite{caltech101}, Dogs~\cite{standogs}, and Flowers~\cite{flowers} dataset, respectively.

In all cases, the parent-child pairs (blue) consistently exhibit the highest similarity scores, with peak values centered 0.83-0.88, indicating strong knowledge consistency between adjacent generations. The grandparent relationships follow, peaking between 0.54–0.57, while the great-grandparent class peaking in the 0.39–0.45 range. In contrast, non-lineage pairs display the lowest similarity, concentrated near 0.00–0.20. This decay across generational distances demonstrates a clear and stable correlation between the embedding vector alignment and the true knowledge evolution, which is consistent with our previous analysis. Additionally, these results demonstrate that our MLA is capable of differentiating at least four generations of models.

Interestingly, the similarity intervals for lineage-nonlineage are well separated without significant overlap. This allows us to define tiered thresholds for lineage inference, enabling reliable differentiation between direct lineage, distant lineage, and non-lineage model pairs. 
For example, in model families derived from ResNet-18, a lower threshold (e.g., 0.3) allows reliable separation of three-generation lineage from non-lineage models, while a higher threshold (e.g., 0.7) precisely captures direct parent-child relationship. We further note that the exact similarity values slightly vary across different model architectures.
In the MobileNet-V2 model family, non-lineage pairs typically have similarity scores below 0.1, while grand-parent relations exceed 0.2; in contrast, the corresponding values for the Diffusion-V2 model family are approximately 0.2 and 0.4, respectively. 
Integrating the experimental findings from different families reveals that non-lineage similarities are mostly below 0.2, whereas parent--child similarities are generally above 0.7. Thus, a universal $T$ in this paper is used across different model families for coarse-grained attestation —potentially reusing the value from a known family with similar architecture. For a higher accuracy, a family-specific $T$ may be adopted by training in advance with labeled lineage pairs.



\section{Experiments}
\subsection{Experimental Settings}
\textbf{Basic Setup and Implementation Details.} We utilized the Adam\cite{adam} optimizer with a learning rate of $1\times10^{-4}$ for training the proposed framework, and all experiments were conducted on two NVIDIA A100 GPUs. For the classification-oriented knowledge encoder \( \Psi \),  the MLP block \( \mathcal{M} \) consists of a fully connected layer followed by a LayerNorm operation, which aims to project the input feature embeddings to a uniform 128-dimensional vector and ensure stable training. Next, the transformer encoder \( \mathcal{T} \) with 2 layers and a hidden size of 128 was applied. To handle varying numbers of categories across models, the input sequences were padded to the maximum number of categories in the candidate model using zero vectors.
Additionally, the knowledge fusion network \( \Phi \) was a lightweight projection block consisting of a single linear layer with dimensions (\(256\times128\)) followed by a ReLU activation.

For the diffusion-oriented knowledge encoder \( \Psi \), the input consists of a $256 \times 256$ image and its corresponding textual prompt. The feature extracted from the \texttt{UpBlock} layer has dimensions of $(320, 32, 32)$. The convolution layer $\mathcal{C}$ with a kernel size of $3 \times 3$ was then applied to this feature, followed by a global pooling operation $\mathcal{G}$ that transformed it into a $320$-dimensional vector. Finally, the MLP $\mathcal{M}$ was used to map this vector into a 160-dimensional knowledge embedding.

For the LLM-oriented knowledge encoder \( \Psi \), input features were first projected to a 512-dimensional vector space through a fully connected layer, followed by a LayerNorm layer. A second position-aware fully connected layer was then applied to capture local contextual patterns, followed by a dropout layer with a rate of 0.1. Finally, an adaptive average pooling layer aggregated the sequence into a 512-dimensional knowledge vector. Accordingly, the fully connected layer in $\mathcal{M}$ of the knowledge fusion network $\Phi$ was configured as 1024×512 dimensions.


\textbf{Classification Model Library}. For the ResNet-18\cite{res18} model families, we developed more than 500 models in total. Specifically, we trained 20 parent models on each of the Flowers\cite{flowers}, Dogs\cite{standogs}, and Caltech-101\cite{caltech101} datasets. 
To promote model diversity, we randomly partitioned classes for parent and child models and trained them with varied hyper-parameters.
The resulting (parent, child) model pairs were then divided, with 80\% used for training and 20\% for testing.
For across-generational relationship, we took the above mentioned parent models as Generation-1 and the child models as Generation-2. We then randomly fine-tuned them down two generations along Caltech101\cite{caltech101}→ Caltech101\cite{caltech101}→ CIFAR100\cite{cifar100}→ Aircraft\cite{aircraft}, Dogs\cite{standogs}→ Dogs\cite{standogs}→ CIFAR100\cite{cifar100}→ Oxford-IIIT Pet\cite{pet}, and Flowers\cite{flowers}→ Flowers \cite{flowers}→ CIFAR100 \cite{cifar100}→ Caltech101 \cite{caltech101}. The fourth-generation models were  primarily included to measure lineage closeness across generations. We also constructed a model family based on  MobileNet-V2 architecture~\cite{mobilenet}, as detailed described in Appendix~\ref{detailsetup}.

\textbf{Diffusion Model Library.}
We selected Stable-Diffusion-v2 version as the model family structure. From the  Hugging Face~\cite{transformers}
repository, we collected 30 distinct versions of \textit{Stable-Diffusion-v2} as the set of parent models. Subsequently, we apply two mainstream fine-tuning techniques—DreamBooth~\cite{dreambooth} and LoRA~\cite{lora}—to generate 100 fine-tuned variants for each method, which serve as the corresponding child models. To obtain knowledge feature representations for lineage analysis, we 
randomly sampled 1000 images from COCO2014~\cite{coco} dataset  as the probe sample set. Further details regarding the implementation of DreamBooth and LoRA are provided in the Appendix~\ref{detailsetup}.

\textbf{LLM Library}. We collected 5 derivative models and 22 additional fine-tuned variants of \textit{meta-llama/Llama-3.1-8B}, as well as 2 derivative models and 49 fine-tuned variants of \textit{Qwen/Qwen2.5-1.5B} from Hugging Face\footnotemark[1]. For LLaMA-families, we specify the parameters of  \textit{meta-llama/Llama-3.1-8B} as the initialization parameters $\theta_0$ and considered the original first-generation fine-tuned models, such as  \textit{TwinLlama-3.1-8B}, as parent models, while their fine-tuned descendants(such as  \textit{TwinLlama-3.1-8B-DPO}) served as child models. This setup was adopted because the initialization parameters of  \textit{meta-llama/Llama-3.1-8B}—which serve as the original parameters for training the parent model—were not available.  Based on descriptions in Hugging Face, we constructed probing datasets by sampling 50 prompts from each of six datasets and collecting five responses per prompt to capture model knowledge.
 Further details regarding the implementation of LLM libraries are provided in the Appendix~\ref{detailsetup}.

\begin{table*}[htbp]
	\caption{The performance of model lineage attestation(MLA) against across-generational attack(AGA) and weight pruning
		attack(WPA). N.L.refers to neural lineage \cite{Lineage}, N.P. refers to neural phylogeny \cite{NP} and T.I. refers to TinyImageNet dataset. }
	 \vspace{-0.2cm}
	\label{tab:classification}
    \centering
       \begin{adjustbox}{max width =0.92\textwidth}
	\begin{tabular}{c|cc|cc|cc|cc|cc|cc}
		\toprule
		\multicolumn{12}{c}{ResNet-18 Model Family}\\
		\hline
		\multirow{3}{*}{Methods}
		&\multicolumn{4}{c|}{Caltech101->Caltech101->CIFAR100}& \multicolumn{4}{c|}{Dogs->Dogs->CIFAR100}&\multicolumn{4}{c}{Flowers->Flowers->CIFAR100}
		\\
		&\multicolumn{2}{c|}{TPR}& \multicolumn{2}{c|}{FPR}&\multicolumn{2}{c|}{TPR}&\multicolumn{2}{c|}{FPR}&\multicolumn{2}{c|}{TPR}& \multicolumn{2}{c}{FPR}\\ 
		& AGA  & WPA & AGA  & WPA   & AGA  & WPA  & AGA  & WPA  & AGA  & WPA  & AGA  & WPA \\
		\midrule
		N.L. \cite{Lineage}  & 0.99  & 0.85  & 0.02 & 0.08 & 0.94 & 0.83 &0.04 &0.06 &0.95 &0.76&0.05&0.12 \\
		N.P. \cite{NP}  &0.58  &  0.43 & 0.21 & 0.33  &0.42 & 0.27 & 0.07 & 0.17 & 0.60&0.40&0.12&0.15 \\
		KRM \cite{KRM} & 0.19& 0.15  & 0.16 & 0.11  & 0.15  &0.14 & 0.13 & 0.13 & 0.20&0.19&0.10&0.13 \\
		IPGuard \cite{IPGuard} & 0.19 & 0.16  & 0.15 & 0.14  & 0.16 & 0.14 & 0.14 & 0.12&0.20 & 0.13&0.11&0.12 \\
		Ours & \textbf{1.00}&\textbf{0.99}&\textbf{0.00}&\textbf{0.01}& \textbf{0.99}&\textbf{0.99}&\textbf{0.02}&\textbf{0.03}&\textbf{0.99}&\textbf{0.99}&\textbf{0.01}&\textbf{0.00}\\
		\hline
		\multicolumn{12}{c}{MobileNet-V2 Model Family}\\
		\hline
		\multirow{3}{*}{Methods}
		
		&\multicolumn{4}{c|}{Caltech101->T.I.->Mixed Dataset}& \multicolumn{4}{c|}{T.I.->Pet->Mixed Dataset}&\multicolumn{4}{c}{Flowers->CIFAR100->Mixed Dataset}
		\\
		
		&\multicolumn{2}{c|}{TPR}& \multicolumn{2}{c|}{FPR}&\multicolumn{2}{c|}{TPR}&\multicolumn{2}{c|}{FPR}&\multicolumn{2}{c|}{TPR}& \multicolumn{2}{c}{FPR}\\ 
		& AGA  & WPA & AGA  & WPA   & AGA  & WPA  & AGA  & WPA  & AGA  & WPA  & AGA  & WPA \\
		\hline
		N.L. \cite{Lineage} &1.00&0.83&0.02&0.08&0.95&0.92&0.13&0.16&0.96&0.94&0.10&0.13 \\
		N.P. \cite{NP} &0.73  & 0.66 &0.11& 0.08&0.65&0.40&0.15& 0.20& 0.63& 0.45& 0.17&0.15 \\
		KRM \cite{KRM}  & 0.18&0.15&0.18&0.11&0.22&0.14&0.18&0.13&0.19&0.18&0.18&0.13 \\
		IPGuard \cite{IPGuard} &0.18&0.16&0.13&0.14&0.14&0.14&0.12&0.20&0.13&0.11&0.12&0.11 \\
		Ours &\textbf{1.00}&\textbf{0.99}&\textbf{0.00}&\textbf{0.03}&\textbf{0.99}&\textbf{0.99}&\textbf{0.01}&\textbf{0.03}&\textbf{0.99}&\textbf{0.98}&\textbf{0.01}& \textbf{0.01}\\
		\hline
	\end{tabular}
    \end{adjustbox}
	\label{testclass}
    \vspace{-0.2cm}
\end{table*}

\textbf{Baselines}. We evaluate our proposed MLA framework against four typical baselines: the state-of-the-art methods Neural Lineage\cite{Lineage} (Abbreviated as N.L.), Neural Phylogeny\cite{NP} (Abbreviated as N.P.), KRM~\cite{KRM}, and IPGuard~\cite{IPGuard}.  
Since the number of classes in parent–child pairs may differ, and both KRM and IPGuard rely on class-specific boundary samples, we use a small number of classes in each pair for consistency.
The N.L. method focuses on the coupling between model weights and features extracted from the fine-tuning dataset. As such access is infeasible under our attack setting, we instead use the parent model’s training data to extract the aligned features for the N.L. method, which preserves the core feature–weight relationship needed for evaluation.

It should be noted that KRM and IPGuard are designed specifically for classification models, and thus they are not applicable in our MLA experiments for diffusion model and LLM, where we compare only with the N.L. and N.P. methods.     \textit{Furthermore, the N.L. method requires storing a large number of feature representations and model parameters, making it infeasible to run on the stable diffusion model and LLM experiments}.

\textbf{Evaluation Metrics.}  
We adopt the following metrics to evaluate the performance of the lineage attestation methods, i.e., the ability to fairly determine the presence or absence of a lineage relationship between attack models and victim models under various threat scenarios. (1)True Positive Rate (TPR): The proportion of positive samples (i.e., model pairs with a true lineage relationship) that are correctly identified.
(2) False Positive Rate (FPR): The proportion of negative samples (i.e., model pairs without a lineage relationship) that are incorrectly classified as positive.
(3) Receiver Operating Characteristic (ROC) curve:It plots the TPR against the FPR at various decision thresholds.

In the \textit{Illegal Model Derivative Verification} scenario, the lineage attestation mechanism  seeks to determine whether a suspect model $f_C^\mathcal{A}$ is directly derived from a protected model \( f_P \). The task is to distinguish true lineage pairs from unrelated ones, where a high TPR indicates effective detection and a low FPR reflects strong robustness against false accusations.

In the \textit{False Lineage Claim Reject} scenario, the verifier must distinguish true direct lineage from indirect or non-lineage relations to invalidate false claims. The goal of this task is to discriminate positive pairs \((\text{parent}, \text{child})\) from negative pairs  \((\text{non-direct parent}, \text{child})\), with the ROC curves used to evaluate the identification performance.

\subsection{MLA Performance for Illegal Model Derivative Verification} 
\label{sec:illegal}
We conduct experiments under two attack scenarios: the Across-Generational Attack (AGA) and  Weight Pruning before Fine-tuning Attack (WPA).  For AGA, we evaluate MLA method’s capability to detect lineage relationships within three generations. For WPA, we adopt a pruning rate $p$ of 10\% following the KRM~\cite{KRM} setup.

\textbf{MLA across Classification Model Families.} 
In our approach, the decision threshold $T$ for lineage detection is set to 0.3. Table \ref{tab:classification} shows the lineage attestation results of both ResNet-18 model family and MobileNet-V2 model family.

First of all, it can be observed 
that our method consistently achieves near-perfect performance against AGA across both ResNet-18\cite{res18} and MobileNet-V2\cite{mobilenet} architectures, with TPR around 0.99 and FPR not exceeding 0.02. 
This superior performance stems from our design, which captures the consistent trajectory of both model parameters alternation and knowledge evolution during fine-tuning, enabling robust adaptation to diverse fine-tuning strategies.
In contrast, the N.L. \cite{Lineage} method achieved suboptimal performance. It attempts to encode the evolution of a model by examining the coupling between its parameters and the extracted data features. However, this method fails to precisely characterize the nuanced knowledge transfer process that occurs during fine-tuning. The N.P.~\cite{NP} method suffers even more severe degradation, with TPR dropping to 0.42 and FPR rising to 0.21 under the  worst-case, due to its reliance on parameter-level differences while overlooking the fine-tuning-induced semantic variations among models. Comparably, both KRM\cite{KRM} and IPGuard\cite{IPGuard} perform poorly, with TPRs ranging from 0.10 to 0.20 and relatively high FPRs. These methods assume stable decision boundaries during fine-tuning, that may hold under lightweight fine-tuning adaptations but fail under substantial task shifts, where models significantly reshape their boundaries to fit new fine-tuning task.

Furthermore, we observe that our method remains effective under WPA scenario, as the pruned models still retain sufficient knowledge to support reliable attestation.  In comparison, N.L.~\cite{Lineage}, which directly encodes parameter changes, shows a noticeable drop in performance under pruning. This is particularly obvious in the Caltech101-based model family, where it suffers a TPR drop of 0.14 in ResNet-18 model family and 0.17 in MobileNet-V2 model family, highlighting its sensitivity to weight perturbations introduced by pruning. Similarly, these perturbations also impact N.P.~\cite{NP}, reducing its detection capability with TPR dropping as low as 0.27 in the worst case. WPA has limited impact on KRM\cite{KRM} and IPGuard\cite{IPGuard}, since it does not significantly alter the model's decision boundaries—on which these methods heavily rely. However, despite their stability, their overall TPR remains low (typically between 0.10 and 0.20), indicating limited capability in accurately detecting true lineage.

\textbf{MLA across Diffusion Model Families.}
For diffusion models, \textit{DreamBooth} and \textit{LoRA} are two widely used fine-tuning techniques, and we evaluate whether our method and N.P. can distinguish models fine-tuned using these techniques from unrelated models.
As shown in Table~\ref{tab:diff}, our method shows strong robustness to AGA and WPA across both  DreamBooth and LoRA approaches. 
We attribute this robustness to the fact that these fine-tune approaches enable diffusion models to adapt to specific tasks by embedding new knowledge, while our MLA method effectively captures this knowledge-level alignment.
In contrast, the baseline method N.P. relies on parameter-level clustering and lacks a mechanism to capture knowledge evolution. As a result, it exhibits  a marginal drop in TPR against AGA (0.09 lower than ours in LoRA),  and suffers a more noticeable decline under WPA than under AGA, with a 0.04 drop on both DreamBooth and LoRA. The attestation signal reflecting model lineage becomes increasingly obscured under WPA. Additionally, N.P. consistently shows high FPR (0.20–0.23), likely due to its clustering-based strategy, which fixes cluster boundaries based on training models. This may lead to test models being misclassified into incorrect clusters, especially when different parent models exhibit similar parameter patterns.
\begin{table}[htbp]
	\centering
	\caption{MLA performance on Stable Diffusion models.}
	   \vspace{-0.1cm}
	\label{tab:diff}
    \begin{adjustbox}{max width =0.47\textwidth}
	\begin{tabular}{lcccccc}
		\toprule
		\multirow{2}{*}{}
		& \multicolumn{2}{c}{Dreambooth (TPR)} 
		& \multicolumn{2}{c}{LoRA (TPR)}
		& \multicolumn{2}{c}{FPR}\\
		\cmidrule(lr){2-3} \cmidrule(lr){4-5} \cmidrule(lr){6-7}
		& AGA & WPA & AGA & WPA & AGA & WPA \\
		\midrule
		N.P.\cite{NP}  & 0.94 & 0.90 & 0.91 & 0.87 & 0.23 & 0.20 \\
		Ours  & \textbf{1.00} & \textbf{0.99} & \textbf{1.00} & \textbf{0.99} & \textbf{0.02} & \textbf{0.02} \\
		\bottomrule
	\end{tabular}
    \end{adjustbox}
\end{table}
 
\begin{figure*}
	\begin{minipage}[b]{.24\linewidth}
		\centering
		\includegraphics[width=.99\linewidth]{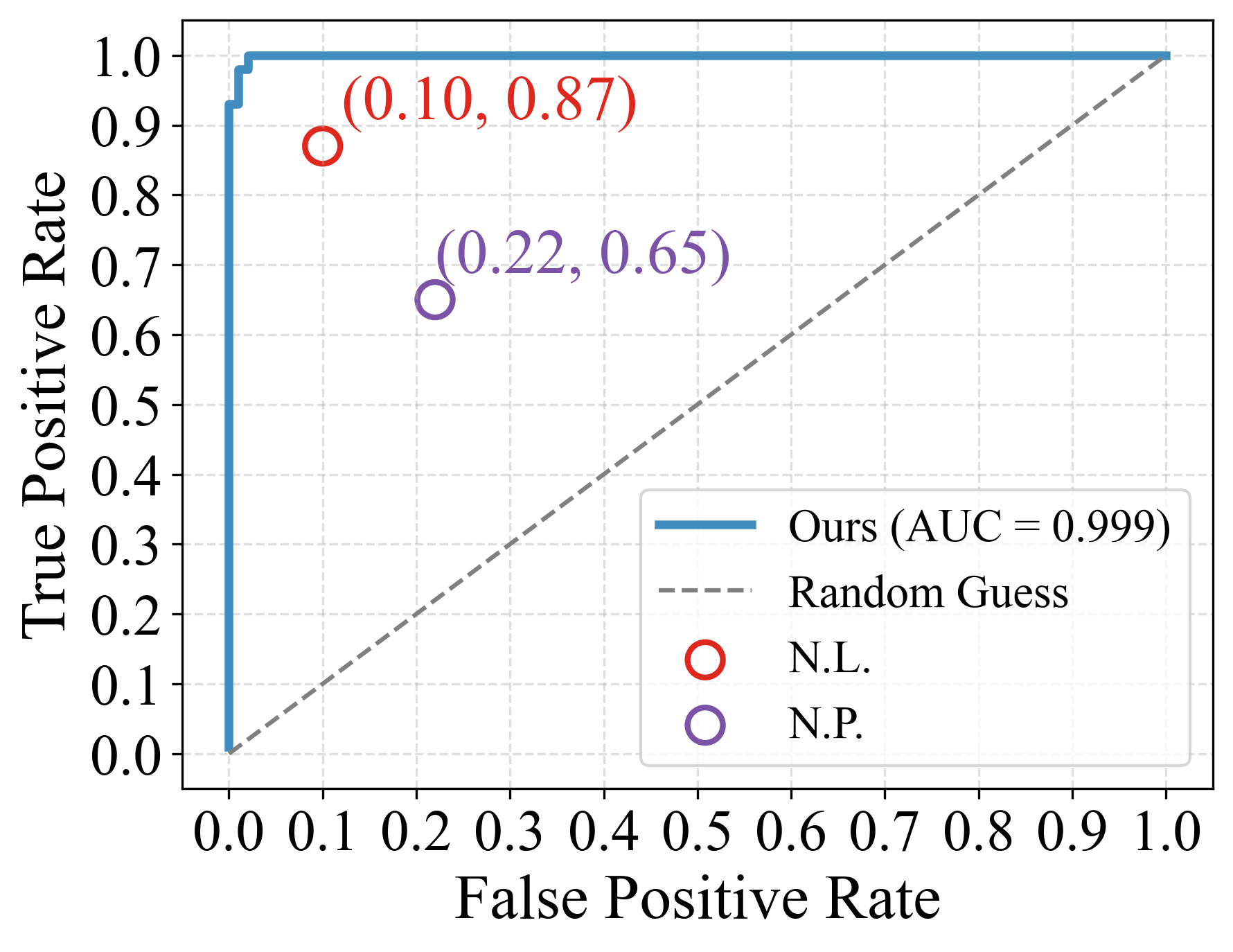}
		\caption*{(a)~ResNet-18 family.}
	\end{minipage} 
	\medskip
	\begin{minipage}[b]{.24\linewidth}
		\centering
		\includegraphics[width=0.99\linewidth]{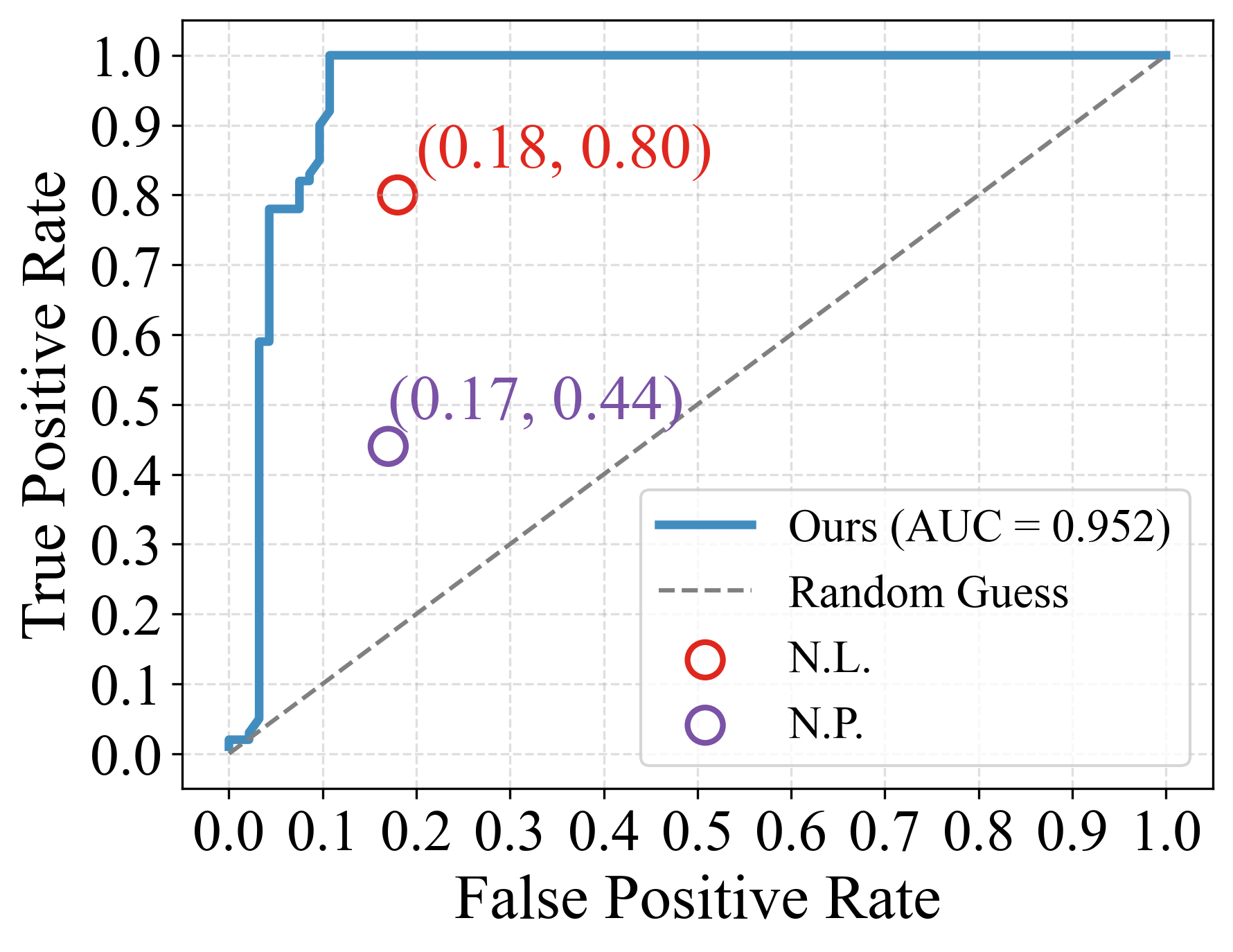}
		\caption*{(b)~MobileNet-V2 family.}
	\end{minipage}
	\begin{minipage}[b]{.24\linewidth}
		\label{rocdiffussion}
		\centering
		\includegraphics[width=0.99\linewidth]{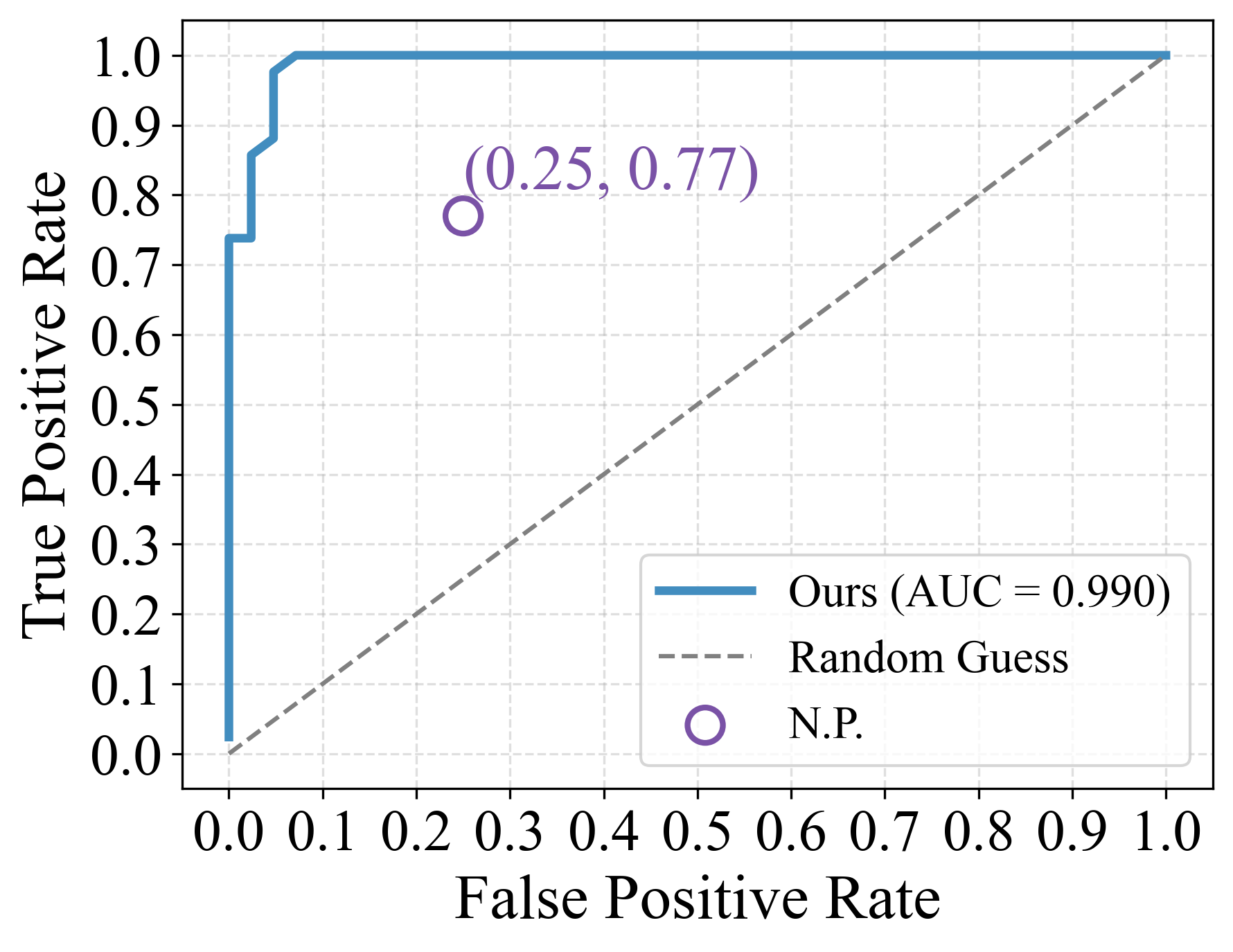}
		\caption*{(c)~Stable Diffusion family.}
	\end{minipage}
	\begin{minipage}[b]{.24\linewidth}
		\label{rocllama3}
		\centering
		\includegraphics[width=0.99\linewidth]{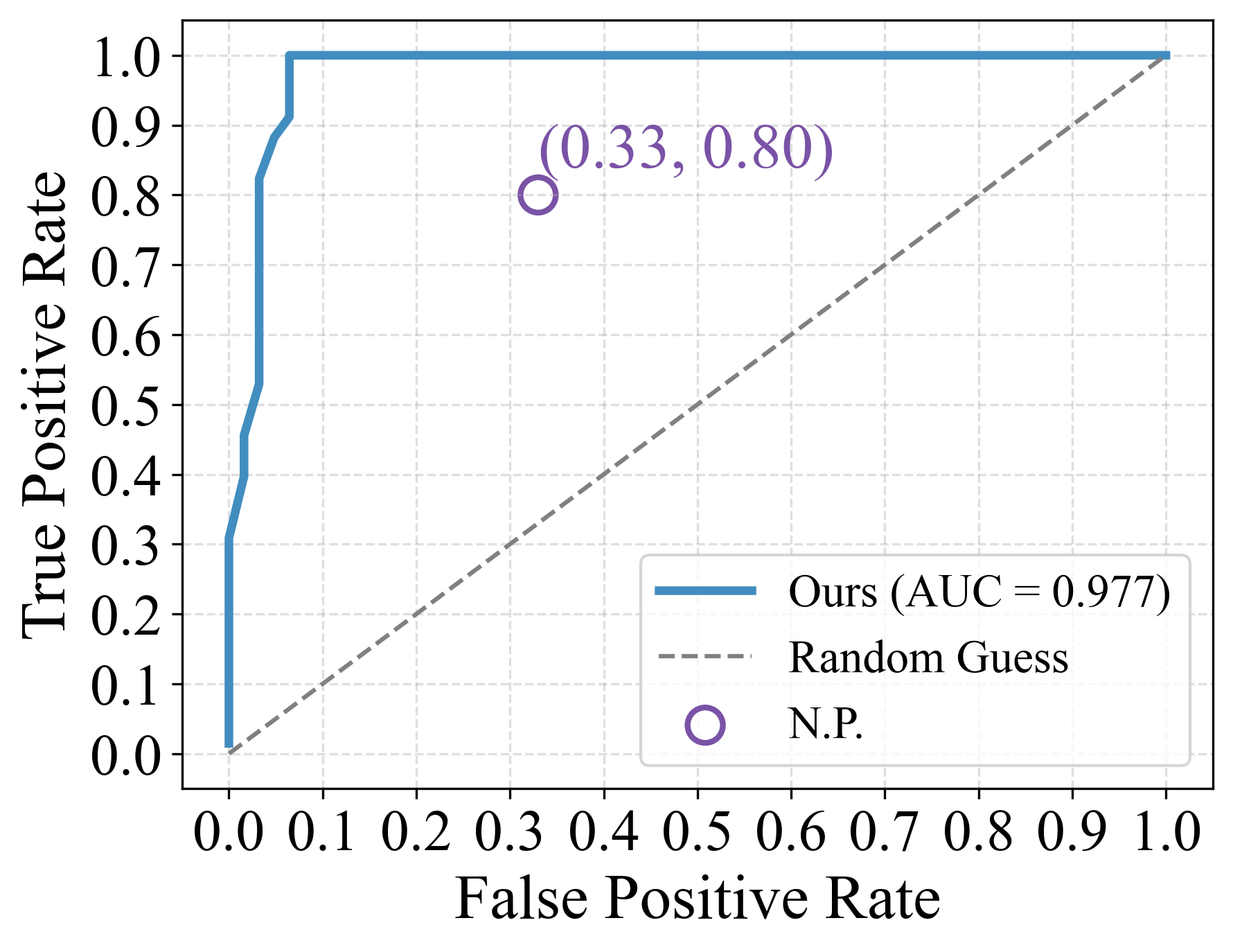}
		\caption*{(d)~LLaMA3-family.}
	\end{minipage}
	\vspace{-0.15cm}
	\caption{ROC of false lineage claim rejection  for various types of model families.}
	\label{fig:roc}
\end{figure*}
\textbf{MLA across LLMs Families.}
For the LLM library, the models are realistic ones directly collected from the open-source repository, not trained by us. Thus we can't perform the WPA experiments and have only the results for AGA.
As shown in Table \ref{tabllm}, our method effectively identifies model lineage relationships, outperforming the baseline (N.P.).  
Despite potential semantic discrepancies between the probe datasets and the parent models, the experimental results show that our method remains robust. 
This improvement can be attributed to its ability to capture fine-grained knowledge transfer patterns during fine-tuning, whereas the N.P. method, relying on heuristic and less adaptive features, is more susceptible to dynamically semantic inconsistencies among models.
\begin{table}[htbp]
	\caption{MLA performance on LLMs.}
	 \vspace{-0.1cm}
	\label{tabllm}
        \centering
        \begin{adjustbox}{max width =0.42\textwidth}
	\begin{tabular}{ccccc}
		\toprule
		\multirow{2}{*}{Model}&\multicolumn{2}{c}{Ours}&\multicolumn{2}{c}
		{N.P.\cite{NP}} \\ \cmidrule(lr){2-3}  \cmidrule(lr){4-5}
		&TPR &FPR&TPR &FPR \\
		\hline
		Llama3.1-8B-family&0.97 &0.02& 0.80&0.33
		\\Qwen2.5-1.5B-family&0.99 &0.01& 0.79&0.21\\
		\bottomrule
	\end{tabular}
    \end{adjustbox}
	\label{testclass} 
\end{table}

\subsection{MLA Performance for False Lineage Claim Rejection}
\label{sec:falseclaim}
Fig.~\ref{fig:roc} shows the ROC curves comparing our method with two binary-based decision baselines, N.P.\cite{NP} and N.L.~\cite{Lineage}, in the lineage verification setting. Unlike our approach, which provides a continuous confidence score for flexible thresholding, the baselines yield only binary decisions, thus appearing as discrete TPR/FPR points. As a result, their performance is represented as the discrete TPR/FPR points.

As shown in Fig.~\ref{fig:roc}, our method achieves an AUC (area under the curve) of 0.999 on the ResNet-18 family and 0.952 on the MobileNet-V2 family, demonstrating strong discriminative ability across a wide range of thresholds. N.L. achieves a lower FPR (0.10) and a comparable TPR (0.87) on ResNet-18, while suffers on MobileNet-V2 with a TPR of 0.80 and a much higher FPR of 0.18. This degradation may stem from the fact that its decision relies on behavioral similarity in weight or output features, which can also be shared by non-direct lineage models within the same family.
N.P. achieves a TPR of 0.65 but incurs a high FPR of 0.22, indicating a strong tendency to misidentify unrelated models as related. This is likely due to its data-free design, which relies solely on parameter-level similarity. As models from the same family often converge along similar optimization trajectories, N.P. fails to effectively distinguish direct lineage from architectural resemblance.

We further evaluate the generalizability of our method on generative and large language model families, as shown in Figs.~\ref{fig:roc}(c) and (d). Our method achieves an AUC of 0.990 on Stable Diffusion-V2 and 0.977 on the LLaMA3 family, indicating strong and consistent performance across diverse model architectures and modalities. The superiority lies in its knowledge-centric design,  especially in large-scale models where richer knowledge enhances signal separability.
In contrast, existing baselines such as N.P and N.L. exhibit a marked inability to differentiate these relationships.
Relying solely on feature-level or parameter-level similarity is insufficient for effective alignment of intrinsic knowledge consistency. The inconsistent performance of N.P. across the four model architectures suggests that it is sensitive to both semantic and structural variations, reflecting a lack of robustness.

\subsection{Resistance to Adaptive Attacks}
In practice, adversaries may become aware of our MLA protocol and develop adaptive attacks to evade verification. They may attempt to disrupt knowledge-consistency verification by altering the model architecture, or to undermine knowledge-vector extraction by perturbing model parameters.

\textbf{Structure Evasion Attack(via distillation).} 
In this attack, the adversary first steals the protected model $f_P$ and fine-tunes it to obtain a descendant model $f_C^\mathcal{A}$. To evade lineage verification, the adversary then applies architecture transformation strategies such as knowledge distillation, transferring $f_C^\mathcal{A}$ into a structurally different student model $f_C^\mathcal{A'}$.
By design, $f_C^\mathcal{A'}$ compromises parameter  or structure-based consistency checks, rendering them ineffective.

Model lineage depicts the fine-tuning relationship between two models, emphasizing their knowledge evolutionary path in fine-tuning,
  while model distillation (born-again neural networks) resembles knowledge cloning, lacking both parameter inheritance and continuity in the knowledge evolution, and thus it can corrupt model lineage. 
However, the behavioral traits of the teacher model can still be inherited by the student model, leading to an atypical pattern of knowledge consistency between the two models.

To assess the knowledge consistency between the suspect model $f_C^\mathcal{A'}$ and the victim model $f_P$, we perform reverse distillation: starting from the initial parameters of $f_P$, we distill $f_C^\mathcal{A'}$ into a proxy model $f_C^\mathcal{A''}$ that has the same architecture with $f_P$ and then evaluate its lineage similarity with $f_P$. Thus, the lineage relationship between $f_C^\mathcal{A'}$ and $f_P$ can be measured by the knowledge consistency between $f_C^\mathcal{A''}$ and $f_P$.
For evaluating the robustness of our MLA to the structure evasion attack (via distillation), we simulate it by distilling the classification model
$f_C^\mathcal{A}$ into the compact network $f_C^\mathcal{A'}$ composed of three convolutional layers in a data-free manner using deep inversion~\cite{deepinv}. We then perform reverse distillation from $f_C^\mathcal{A'}$ towards $f_P$ to obtain $f_C^\mathcal{A''}$ and measure the resulting lineage similarity to verify knowledge inheritance.    


Table~\ref{abldis} reports the resistance of our MLA against distillation attacks under varying student model qualities. When the student model achieves a high test accuracy (65.43\%), corresponding to a well-distilled model, MLA attains a TPR of 0.76, indicating almost perfect detection of lineage relationships. As the distillation quality decreases (test accuracy drops to 48.62\%), the overlap of inherited knowledge diminishes, leading to a lower TPR of 0.36. Usually, the goal of a distillation attack is to preserve a high-quality (i.e. a higher test accuracy more than 65.43\%) student model. That said, the attacker is still difficult to bypass our MLA’s verification unless the student model is with very low accuracy.
\begin{table}[htbp]
	\caption{Resistance of MLA to Distillation Attacks. Acc.  refers to the classification accuracy.  }
    \centering
    \begin{adjustbox}{max width =0.46\textwidth}
	\begin{tabular}{c|c|c|c|c}
		\toprule
		Acc.(\%) &89.90($f_P$)&65.43($f_C^\mathcal{A'}$)&50.32($f_C^\mathcal{A'}$)&48.62($f_C^\mathcal{A'}$)\\
		\midrule
		TPR&0.99 &0.76 & 0.54 &0.36 \\
		\bottomrule
	\end{tabular}
	    \end{adjustbox}
    \label{abldis}
\end{table}


\begin{figure}[htbp]
	\centering
	\includegraphics[width=0.8\linewidth]{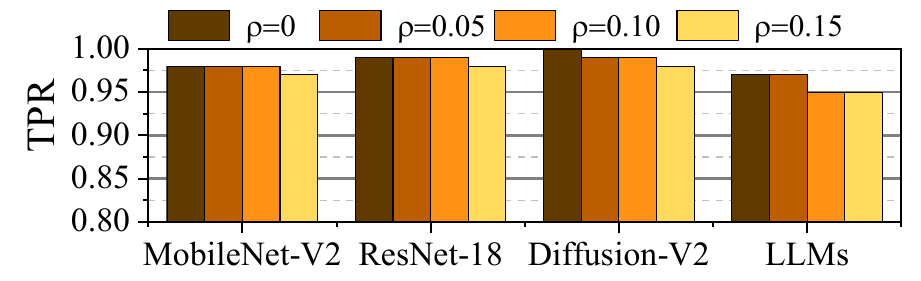}
	\caption{ Lineage attestation accuracy across different model families  against varying parameter perturbation radio $\rho$.}
	\label{perturfig}
    \vspace{-0.2cm}
\end{figure}

\textbf{Parameter Perturbation Attack.}
We simulate an adaptive attack that perturbs model parameters to hinder obtaining correct evolution-model and knowledge-vector. Following NPET~\cite{pertur}, we add
zero-mean uniform noise scaled to each parameter’s mean absolute value.
The perturbation ratios $\rho \in \{0.05, 0.1, 0.15\}$ specify the fraction of mean absolute values used to control the resulting standard deviation of the noise.
As shown in Fig.~\ref{perturfig}, these parameter perturbations do not noticeably
affect knowledge alignment, despite slightly altering the model weights without
impacting inference. This indicates that MLA is resilient to random parameters
noise, and such minor perturbations are insufficient to compromise lineage
verification.

\textcolor{purple}{\textbf{Stronger Weight-Pruning Attacks.}}
\textcolor{purple}{Weight-pruning serves as an adaptive attack, as removing neurons may disrupt the knowledge patterns used for lineage assessment. 
We evaluate the resilience of MLA to larger weight-pruning rates. Since weight-pruning attack inevitably degrades model performance that reflects the model’s practical usability, we assess the attack-induced model performance degradation using task-appropriate metrics. Specifically, for LLMs, we measure the relative increase ($\Delta$~PPL(\%)) in Perplexity (PPL)~\cite{ppl} compared to the original model; for diffusion models, we evaluate the relative change ($\Delta$~FID(\%)) in Fréchet Inception Distance (FID)~\cite{fid}; and for classification models, we report the percentage of accuracy loss ($\Delta$~Acc.(\%)).
PPL measures a language model’s uncertainty in predicting the next token given the preceding context and  an increase in PPL directly reflects a degradation in the model’s ability to generate fluent and coherent text.
FID, on the other hand, quantifies the discrepancy between the distributions of generated images and real images in the feature space, and an increase in FID indicates that the generative model approximates the target image distribution poorly.}
\begin{figure*}[t!]
    \centering
        \begin{minipage}{0.98\textwidth}
    \begin{minipage}{0.32\linewidth}
        \centering
        \includegraphics[width=0.85\linewidth]{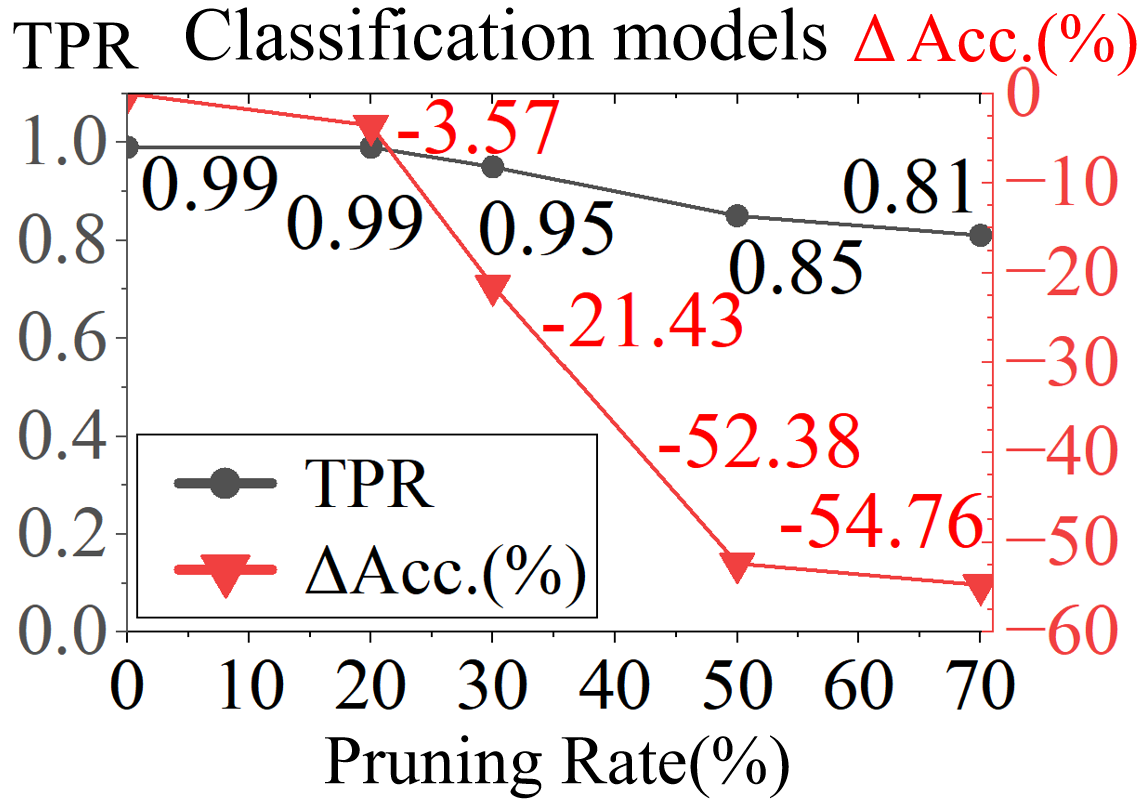}
        \caption*{(a) WPA for classification models.}
    \end{minipage}
    \hfill
    \begin{minipage}{0.32\linewidth}
        \centering
        \includegraphics[width=0.85\linewidth]{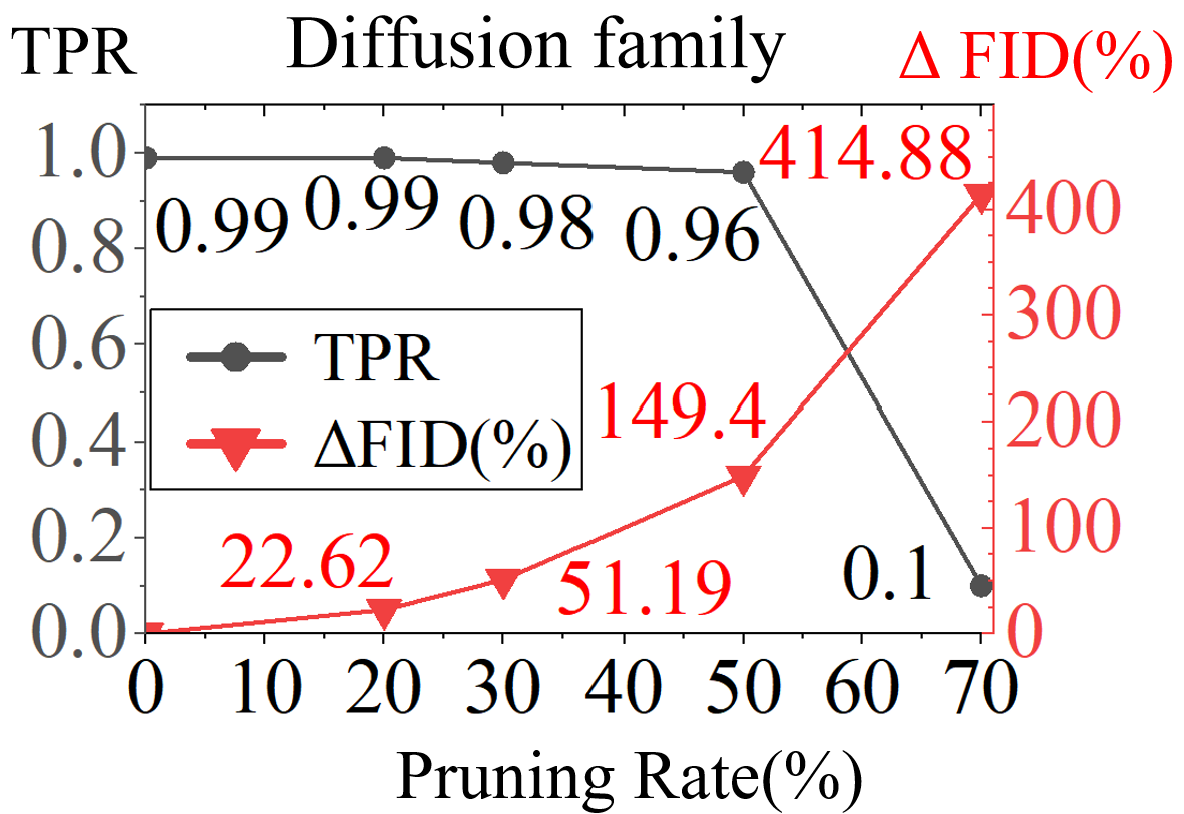}
        \caption*{(b) WPA for diffusion models.}
    \end{minipage}
    \begin{minipage}{0.32\linewidth}
        \centering
        \includegraphics[width=0.85\linewidth]{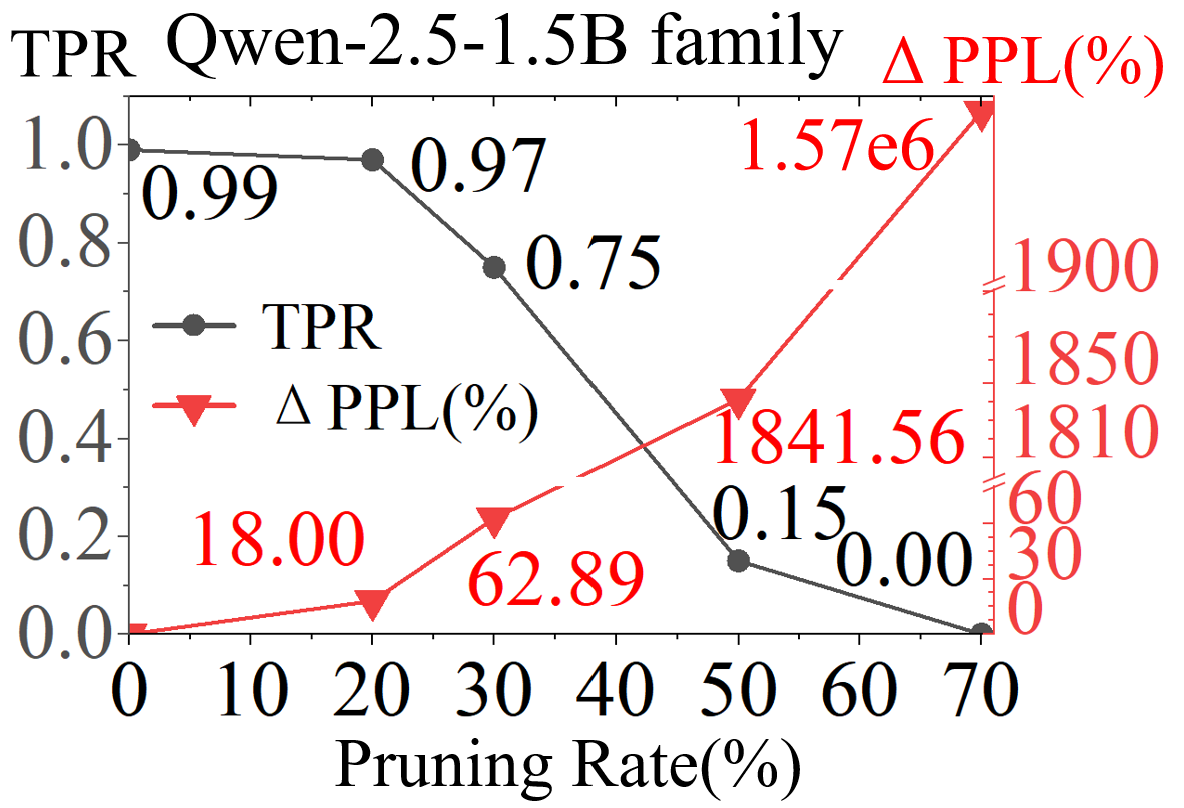}
        \caption*{(c) WPA for LLMs.}
    \end{minipage}
        \caption{\textcolor{purple}{MLA performance against Weight Pruning Attack.}}
        \label{figwpa}
        \end{minipage}
    \hfill
\end{figure*}

\textcolor{purple}{For classification models, child models are constructed via fine-tuning without introducing new classes, such that the number of target classes remains unchanged. As shown in Fig.~\ref{figwpa}(a), the TPR (averaged over the ResNet-18 and MobileNet-V2 families) decreases as the pruning ratio increases, accompanied by a corresponding drop in classification accuracy.
This behavior can be attributed to the fact that, when the class set is unchanged, the classification knowledge encoded in the decision boundaries largely preserves its fundamental structure even after pruning. As a result, classification models exhibit relatively higher robustness under increasing pruning strength.
In contrast, as shown in Fig.~\ref{figwpa}(b) and Fig.~\ref{figwpa}(c), LLMs and diffusion models demonstrate significantly higher vulnerability under stronger pruning attacks. As the pruning ratio increases, features that encode the model’s sensitivity to training data are progressively removed, leading to a substantial degradation in lineage attestation accuracy.
It is also worth noting that, although classification models maintain relatively high lineage identification performance under moderate pruning, their classification accuracy drops sharply once the pruning ratio exceeds 50\%, which severely limits their practical usability. A similar trend is observed for diffusion models and LLMs, where aggressive pruning (e.g., 50\%) results in substantial degradation of both generation quality and model usability.}

\textcolor{blue}{\textbf{Knowledge-Overwriting Attack.}
In the Illegal Model Derivative Verification setting, an adversary aware of the
probe dataset may attempt to corrupt the extracted knowledge by injecting behaviors
that conflict with probe-induced signals. We assume that the adversary possesses prior knowledge of the probe dataset 
(e.g., images for classification models, generative images and captions for diffusion models, and probe prompts
for LLMs). We vary the attack strength by perturbing different portions of the 
probe set 
for the following mechanisms:
(i) \textit{Label-Perturbed Fine-tuning for Classification Model.}
The adversary re-labels a subset of samples within each selected class, inducing shifts in the decision regions tied to probe data.
(ii) \textit{Caption-Corrupted Fine-tuning for Diffusion Models.}
The adversary substitutes probe captions with incorrect descriptions and fine-tunes adapters on the resulting mismatched image–caption pairs, disrupting probe-induced semantic alignment.
(iii) \textit{Cross-Family Response Fine-tuning for LLMs.}
The adversary replaces the correct probe outputs with responses from an unrelated
model (e.g., Llama-3.1-8B), using them as pseudo-targets during fine-tuning to
inject cross-family behavioral patterns.
As shown in Fig.~\ref{fig-overwriting},
the TPR values of classification models degrade only slightly because boundary perturbations do not alter their underlying decision structure.  
For diffusion models, caption corruption is localized to the mismatched image–caption pairs, leaving semantic knowledge within unrelated probe largely unaffected. In contrast, LLMs are more vulnerable since the cross-family pseudo-targets shift their conditional token distribution.   We observe that the knowledge overwriting attack can forcibly inject
conflicting behavioral signals into the suspect model. 
However, such
effects can be straightforwardly mitigated by evaluating the model on
an independently constructed probe dataset.
}
\begin{figure}[htbp]
        \centering
        \includegraphics[width=0.65\linewidth]{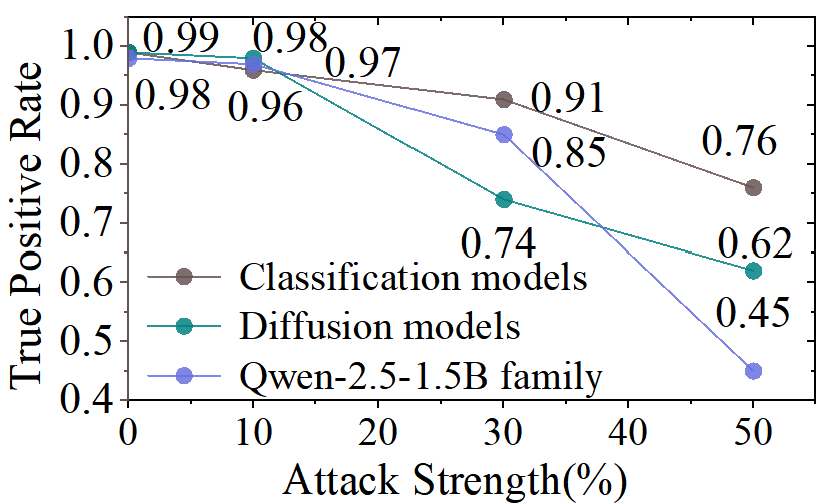}
        \caption{MLA performance against Knowledge-Overwriting Attack.}
            \label{fig-overwriting}
\end{figure}




\subsection{Ablation Study}
In this section, we present ablation studies to examine the influence of critical factors in our method, including the effect of probe data size and the generalization ability for different LLM families. Additional results are provided in Appendix~\ref{probealterdadta}.



\begin{figure}[htbp]
    \centering
    \begin{minipage}{0.47\linewidth}
        \centering
        \includegraphics[width=\linewidth]{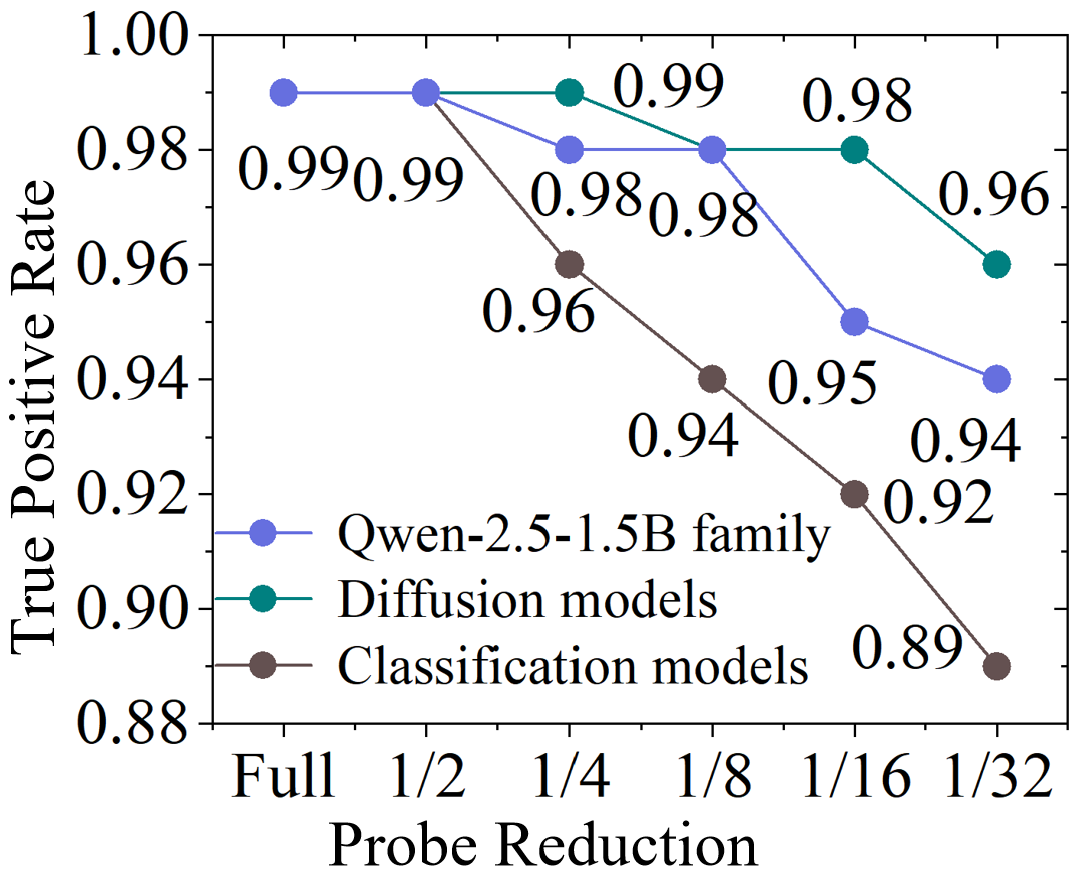}
        \caption*{(a) Varying data size. }
    \end{minipage}
    \hfill
    \begin{minipage}{0.47\linewidth}
        \centering
         \includegraphics[width=\linewidth]{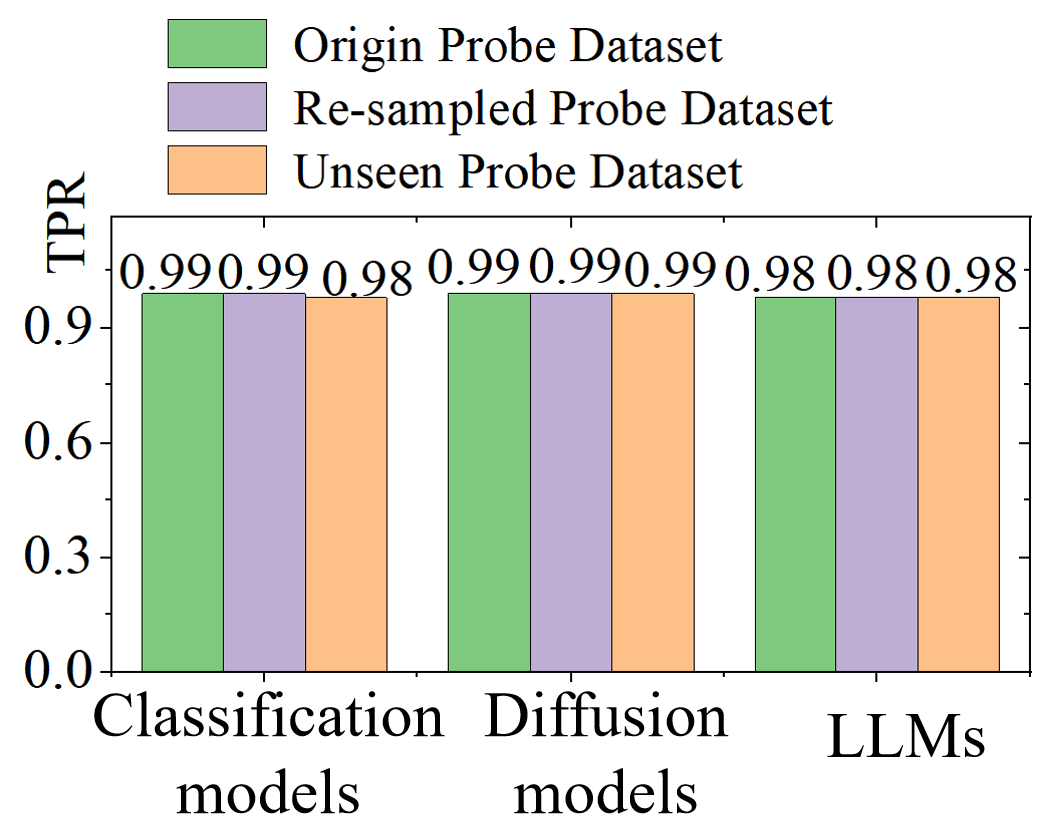}
        \caption*{(b) Varying data source.}
        \label{knowledgeoverwriting}
    \end{minipage}
    \caption{\textcolor{blue}{MLA performance against varying probe dataset.}}
	\label{probe}
    \vspace{-0.2cm}
\end{figure}

\textcolor{blue}{
\textbf{Effect of Varying Probe Dataset.}
To evaluate the sensitivity of MLA to probe quality, we evaluate its robustness with respect to both varying probe size and varying probe data source. The probe data are constructed either by re-sampling an equal number of samples from the same dataset, or by drawing samples from unseen datasets during training (detailed in Section~\ref{probealterdadta}). 
As shown in Fig.~\ref{probe}, classification models are only mildly affected by different data source. Furthermore, performance on diffusion models and LLMs remains largely stable under both different probe size and varying data source.
Even when the probe is drawn from an unseen dataset, diffusion models and LLMs still exhibit non-trivial semantic alignment on these data, owing to their strong priors and extensive pretraining on large-scale datasets.}

\textcolor{blue}{ 
\textbf{LLM Families Generalization.}
To further evaluate the generalizability of MLA beyond the model
families used in the main experiments, we test its performance on
20 additional HuggingFace variants spanning four major ecosystems:
Microsoft Phi-4, Alibaba Qwen-3, Mistral, and Google Gemma. The
complete list of evaluated model variants is provided in our OpenScience
repository.\footnote{\url{https://zenodo.org/records/17958570.}} As
shown in Table~\ref{tab:vendor-tpr-fpr}, MLA achieves consistently high
TPR (0.98–0.99) and low FPR (0.01–0.03) across all four families.  These results demonstrate that the proposed lineage attestation framework generalizes robustly across heterogeneous 
architectures and training pipelines, indicating its applicability to a broad range of open-weight LLM ecosystems.}

\begin{table}[h]
\centering
\setlength{\tabcolsep}{9pt}
\renewcommand{\arraystretch}{1.15}
\caption{\textcolor{blue}{MLA performance across additional LLM families.}}
\label{tab:vendor-tpr-fpr}
\begin{adjustbox}{max width =0.43\textwidth}
\begin{tabular}{lcccc}
\toprule
 & \textbf{Phi-4} & \textbf{Qwen-3} & \textbf{Mistral} & \textbf{Gemma} \\
\midrule
TPR & 0.99 & 0.98 & 0.99 & 0.98 \\
FPR & 0.01 & 0.01 & 0.03 & 0.01 \\
\bottomrule
\end{tabular}
\end{adjustbox}
\vspace{-0.4cm}
\end{table}

\section{Conclusion}
In this work, we unravel the knowledge mechanism of fine-tuning-induced model evolution. 
Based on this rationale, we address the model lineage attestation problem by exploring the knowledge evolution in neural networks and propose a model lineage attestation framework. In the framework, knowledge modifications by fine-tuning are explicitly captured through model editing, and further a knowledge vectorization technique is proposed to embed these edited changes into a common latent space. This enables robust alignment in terms of the consistency among the resulting knowledge embeddings, providing the fundamental evidence for confirming model lineage relationship. Extensive experimental evaluations under two key adversarial scenarios validate the effectiveness of our method. Our framework offers a robust solution for verifying the lineage of model derivatives, especially in the context of open-weight model reuse, where unauthorized derivations and false lineage claims are more prevalent.

\section*{Acknowledgments}
We thank the anonymous reviewers and our shepherd for their
constructive comments. This work was  supported
by the Strategic Priority Research Program of the
Chinese Academy of Sciences (NO.XDB0690302),  NSFC under grant No.62371450 and Project of Key Laboratory of Cyberspace Security Defense under grant No.2025-A03.

%
%

\section*{Ethical Considerations}
\textcolor{blue}{
In this work, we study the problem of model lineage attestation using both publicly released models and those we trained on open-source datasets. All experiments are carried out entirely within local premises on controlled servers, ensuring that no external systems or third-party services were involved. Nevertheless, we emphasize that any future application of lineage attestation techniques to proprietary or commercial models should be conducted only with explicit authorization and in accordance with relevant privacy policies and regulatory standards.
}


\textcolor{blue}{
Lineage attestation introduces considerations for several stakeholders: model developers and model owners may be affected when derivation relationships are inferred, as such findings can influence intellectual-property claims or assertions of model provenance; organizations deploying ML systems may rely on lineage conclusions to support auditing or compliance decisions, and thus have an interest in ensuring that such analyses are conducted appropriately; end-users and the wider public are indirectly connected, since the trustworthiness of deployed AI systems depends in part on the authenticity of their underlying models.}

\textcolor{blue}{
As with any analysis technique that examines model parameters, it is plausible that lineage attestation could be misused to infer relationships between models without authorization. By determining whether one model is derived from another, an actor could potentially employ such information to support unauthorized provenance claims or to dispute ownership inappropriately. However, these risks are limited in practice: lineage attestation requires explicit access to both models and does not reveal training data, internal parameters, or other sensitive content. Under white-box access conditions, the risks are further mitigated because the evaluation environment and model access paths are strictly controlled. Within these constraints, the defensive value of lineage attestation outweighs its potential risks, as it enables the reliable identification of illicit derivatives and prevents false lineage assertions in ML supply chains. The process involves comparing model parameters or derived representations, which underscores the importance of performing such evaluations only with appropriate authorization, particularly when proprietary systems are involved.
}

\section*{Open Science}
Implementations of our method as well as the model zoo derived from our study, will be made publicly available. Code available at: https://zenodo.org/records/17958570. 

\bibliographystyle{plain}
\bibliography{sample-base}


%
%
%
%
%
%


\appendix
\section{Implementation Details for Building Diverse Model Families}
\label{detailsetup}
\textbf{Implementation details for building MobileNet-V2 model families.}
We also constructed a model family based on the MobileNet-V2 architecture~\cite{mobilenet}. In this family, the parent and child models were trained on different datasets to simulate cross-dataset transfer experiments.   We designed multi-generation fine-tuning paths such as Caltech-101\cite{caltech101} → Tiny ImageNet\cite{tinyimg} → MixedDataset, Flowers\cite{flowers} → CIFAR100\cite{cifar100} → MixedDataset, and Tiny ImageNet → Oxford-IIIT Pet\cite{pet} → MixedDataset. Here, MixedDataset refers to a set of diverse fine-tuning tasks randomly selected from Aircraft\cite{aircraft}, DTD\cite{dtd}, and Dogs\cite{standogs}, in order to increase domain diversity. All models in this family were developed using the same randomization strategies as in the ResNet-18 model library.

\begin{figure*}[htbp]
	\centering
\includegraphics[width=0.95\linewidth]{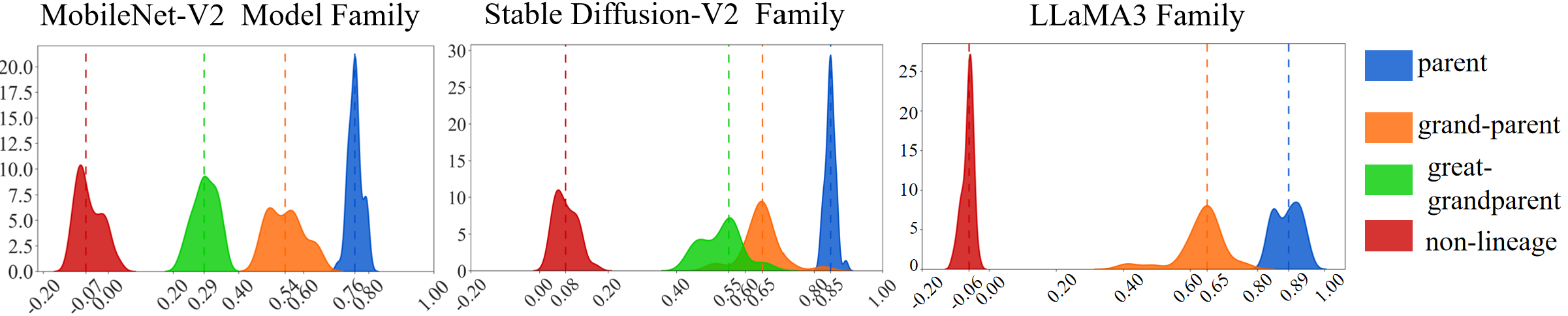}
	\caption{ Kernel density estimation(KDE)
		of similarity scores across different model architectures. Each subplot shows the similarity distribution between a child model and its parent, grandparent, great-grandparent and non-lineage items, respectively.}
	\label{diedaiarchi}
\end{figure*}

\textbf{Implementation Details for Building Diffusion Model Families.} 
For fine-tuning via DreamBooth, we adopt the official dataset provided by the authors, utilizing all 30 sets of image sets. The fine-tuning is performed using the officially recommended hyperparameters to ensure reproducibility and consistency with standard usage. More specifically, the learning rate was set to $1\times10^{-5}$. During the fine-tuning process, we save model checkpoints at both 10,000 and 20,000 training steps to serve as representative fine-tuned versions for subsequent evaluation and analysis. For LoRA-based fine-tuning, we employ the publicly available \textit{Naruto}~\cite{naruto} dataset and explore two LoRA rank settings: 128 and 256, following common practices in diffusion model adaptation tasks. The learning rate is set to $1\times10^{-5}$, and we save the model checkpoint after 10,000 training steps as the fine-tuned version for subsequent evaluation.

\textbf{Implementation Details for Building LLM  Model Families.} 
We randomly selected a subset of models under \textit{meta-llama/Llama-3.1-8B} that serve as parent models, each having at least one identifiable descendant model—most of which are derived via fine-tuning. 
 For \textit{nvidia/OpenMath2-Llama3.1-8B}, since only the merged model is publicly available on Hugging Face, we treat the merged version as its derivation in our analysis.
We randomly select a subset of models under \textit{Qwen/Qwen2.5-1.5B} that serve as parent models,  covering a diverse set of fine-tuned variants, including those derived via instruction tuning, model merging, and adapter-based fine-tuning. This setup ensures that our evaluation spans a broad spectrum of real-world adaptations commonly applied to Qwen-based models.  These descendant model variants, covering both LLaMA and Qwen families, are available in our OpenScience repository.\footnote{\url{https://zenodo.org/records/17958570.}}
Based on their Hugging Face model descriptions, we selected the datasets ``MMLU'', ``LLMTwin'', ``OpenMath'', ``Guard'', ``Tulu'', and ``Dolphin'' to construct probing datasets, from which we randomly sampled 50 prompts per dataset and collected 5 responses per prompt to capture the models' knowledge. Due to experimental constraints, we limited our evaluation to LLMs within Hugging Face model repository.
 Since both grandparent–child and sibling relationships involve models that are separated by two generations of fine-tuning datasets, we approximate the similarity of grandparent–child pairs using that of sibling models. This substitution is reasonable because sibling models, though derived from different fine-tuning paths, reflect a comparable degree of lineage distance.

\section{Additional Experiments}
\subsection{Fine-Grained Analysis of the Knowledge Similarity} 
\label{ftroundsandarchi}
\textbf{Effect of Model Architecture.} 
 Fig.~\ref{diedaiarchi} shows the KDE curves of model families based on different network architectures. 
 To demonstrate the effectiveness of our method in realistic model repository scenarios, we report validation results only on parent–child, non-lineage, and grandparent–child relationships in LLaMA3 family.
 As the lineage relationship becomes weaker, the similarity between knowledge vectors gradually decreases. Although in Diffusion experiments, there is no obvious margin between grandparent and great-grandparent relationship, we believe this is mainly due to two reasons. First, the randomly sampled probe dataset does not fully cover the semantic scope of the parent model, and therefore yields highly similar features for those two generations. Second, large foundation models trained on diverse data sources usually retain most of their original semantic representations after fine-tuning, which reduces the semantic gap between successive generations.
 
\textbf{Effect of Fine-tuning Rounds.} 
We define one generation as fine-tuning a model once with a downstream dataset. As discussed in the main text, the lineage similarity score predicted by our framework decreases as the number of fine-tuning generations increases. To systematically evaluate this effect, we conduct experiments on both MobileNet-V2 and ResNet-18 families, where models are derived from a single source and thus more sensitive to fine-tuning transformations.

We consider two scenarios: (\emph{i}) \textbf{same-domain fine-tuning}, where the original model is trained on a subset of Caltech101 classes and the downstream fine-tuning tasks are sampled from other Caltech101 categories; and (\emph{ii}) \textbf{cross-domain fine-tuning}, where the downstream tasks are drawn from the Dogs dataset. As shown in Fig.~\ref{diedai}, the similarity score consistently decreases with the number of fine-tuning rounds, confirming that iterative fine-tuning progressively obfuscates the lineage attestation trace. Importantly, the scores remain distinguishable from those of non-lineage models: for MobileNet-V2, the score stays above $0.2$, and for ResNet-18, it remains above $0.3$. This indicates that while fine-tuning reduces knowledge consistency, the lineage signal persists across multiple generations. Moreover, the difference between same-domain and cross-domain settings is marginal, suggesting that the depth of fine-tuning has a stronger impact on lineage verification than task distribution shifts.

\begin{figure}[htbp]
	\centering
	\includegraphics[width=0.9\linewidth]{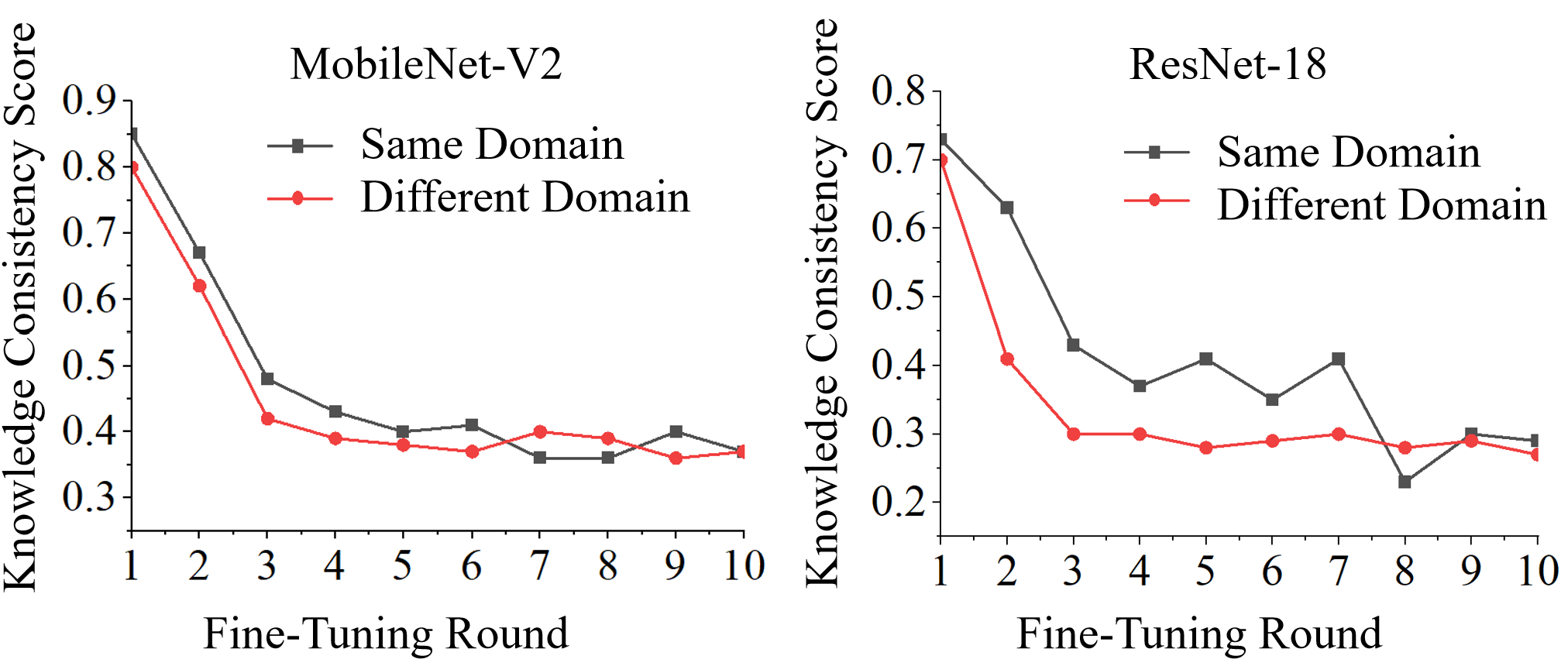}
	\caption{ Knowledge consistency scores with increasing rounds of fine-tuning.}
	\label{diedai}
\end{figure}



\subsection{\textcolor{blue}{Additional Ablation Studies and Details}} 
\label{probealterdadta}
\textcolor{blue}{
\textbf{Detailed implementation under Trusted Alternative Probe Datasets.} For in-domain probing, we resample a
new set of data points from the original training distribution to form an
alternative probe set, ensuring that the semantic domain remains
consistent while the specific samples differ. For cross-domain probing, we replace the original probe dataset with a
trusted dataset drawn from a related but distinct distribution—for example, using ImageNet as a substitute for Caltech-101 in classification
models, using CIFAR-100–derived captions as a replacement for the
COCO captions in diffusion models, or using} \verb|mgsm, gsm8k, hellaswag, humaneval|, \verb|arc_challenge| as substitutes for the original probe prompts in LLMs.

\textbf{Effect of the Key Component in Our Framework.} 
We ablate each component by measuring the TPR drop upon removal, evaluating its
impact on lineage discrimination. Results of  MobileNet-V2  family are
shown in Table~\ref{ablation}.
Firstly, removing the evolution knowledge vector(w/o $\mathbf{h}_\Delta$) reduces the score to 0.66, confirming that the evolution model effectively captures parent-to-child knowledge transfer. 
Secondly, substituting the knowledge encoder with mean pooling  (w/o $\Psi$) of feature embeddings yields a TPR of 0.91 but incurs a 0.08 accuracy drop, indicating that $\Psi$ compactly encodes lineage-relevant knowledge beyond raw features.
Finally, replacing the knowledge fusion network (w/o $\Phi$) with a simple sum $\mathbf{h}_C + \mathbf{h}_\Delta$ decreases performance from 0.99 to 0.43, demonstrating $\Phi$’s role in integrating inherited and evolved knowledge.
\begin{table}[htbp]
	\caption{Ablation study for the proposed framework, where w/o refers to without. }
	 \vspace{-0.1cm}
    \centering
    \begin{adjustbox}{max width =0.45\textwidth}
	\begin{tabular}{c|c|c|c|c}
		\toprule
		 &our MLA &w/o~$\mathbf{h}_\Delta$ &w/o~$\Psi$ & w/o~$\Phi$ \\
		\midrule
		TPR&0.99 &0.66 & 0.91 &0.43 \\
		\bottomrule
	\end{tabular}
	    \end{adjustbox}
    \label{ablation}
\end{table}

\subsection{\textcolor{blue}{Extended Adaptive Attack}} 
\textcolor{blue}{\textbf{Knowledge Infusion Attack.} In the False Lineage
Claim Rejection scenario,  the adversary may attempt to construct a forged parent
model $f_P^\mathcal{A}$ that exhibits behavior similar to the child model \( f_C \).  To achieve this, the adversary queries
\( f_C \) on the probe dataset and uses its responses to fine-tune
\( f_P^{\mathcal{A}} \), thereby distilling probe-induced behavior into the
forged model to fabricate a plausible parent–child
relationship. As shown in Table~\ref{knowledgeinfusion}, its similarity to \( f_C \)
remains low (0.31/0.31/0.32 for 6\%, 30\%, and 60\% probe usage),
well below the 0.7 threshold. These results indicate that iterative fine-tuning progressively distorts the alignment between parameter evolution and underlying knowledge, leading to
decreasing lineage similarity instead of successful lineage reconstruction.
}
\begin{table}[htbp]
\centering
\caption{Resistance to Knowledge Infusion Attacks.}
\label{tab:false_lineage}
\begin{tabular}{c c c}
\toprule
\textbf{Probe Usage} & \textbf{Similarity to $f_C$} & \textbf{Verified($T$=0.7)} \\
\midrule
6\%   & 0.31 & $\times$ \\
30\%  & 0.31 & $\times$\\
60\%  & 0.32 & $\times$ \\
\bottomrule
\end{tabular}
\label{knowledgeinfusion}
\end{table}

\section{\textcolor{blue}{Discussion  and Limitations}}
\textcolor{blue}{
\textbf{Practical considerations for  third-party evidence collection.}
MLA operates under a   model management platform in which a trusted attestation authority evaluates artifacts submitted by the claimant. This requirement aligns with existing model-governance practices: enterprise registries and public model platforms already maintain model parameters, version histories, and provenance
metadata, making the forwarding of platform-certified artifacts—such as the initialization $\theta_0$—a natural extension of current workflows.
Under this setting, authenticity of the submitted model parameters can be ensured, as the malicious claimant cannot modify platform-certified checkpoints.
However, verifying the authenticity of auxiliary evidence such as probe datasets is more challenging. MLA focuses on validating claimed lineage relationships rather than reconstructing the primordial training origin of a model, and a trusted authority may not be able to guarantee that a submitted probe dataset is the exact one used during fine-tuning.  
In such cases, the authority may instead construct alternative probe sets drawn from semantically similar
sources. As demonstrated in Fig.~\ref{probe}, semantically aligned probe datasets achieve comparable attestation performance; however, developing stronger mechanisms for verifying dataset provenance remains an open direction for future work.}


\end{document}